\newcommand{\defword}[1]{\textbf{\boldmath{#1}}}
\newcommand{\cI}{\mathcal{I}}
\newcommand{\cS}{\mathcal{S}}
\newcommand{\RR}{\mathbb{R}}
\newcommand{\ISTATE}[1][1]{\STATE\hspace{#1\algorithmicindent}}
\newtheorem{definition}{Definition}
\newtheorem{theorem}{Theorem}
\newtheorem{lemma}{Lemma}
\newtheorem{prop}{Proposition}
\newtheorem{corollary}{Corollary}
\newtheorem*{prop-weakdom}{Proposition \ref{prop:weakdom}}
\newtheorem*{theorem-normaldom}{Theorem \ref{thm:normaldom}}
\newtheorem*{theorem-domacts}{Theorem \ref{thm:domacts}}
\newtheorem*{theorem-domstrats}{Theorem \ref{thm:domstrats}}
\newproof{proof-sketch}{Proof Sketch}
\newproof{proof-prop-weakdom}{Proof of Proposition \ref{prop:weakdom}}
\newproof{proof-thm-normaldom}{Proof of Theorem \ref{thm:normaldom}}
\newproof{proof-thm-domacts}{Proof of Theorem \ref{thm:domacts}}
\newproof{proof-thm-domstrats}{Proof of Theorem \ref{thm:domstrats}}
\def\etal{\textit{et al.}}
\def\ie{\textit{i.e.}}
\journal{Artificial Intelligence}
\begin{document}

\begin{frontmatter}

%% Title, authors and addresses

%% use the tnoteref command within \title for footnotes;
%% use the tnotetext command for the associated footnote;
%% use the fnref command within \author or \address for footnotes;
%% use the fntext command for the associated footnote;
%% use the corref command within \author for corresponding author footnotes;
%% use the cortext command for the associated footnote;
%% use the ead command for the email address,
%% and the form \ead[url] for the home page:
%%
%% \title{Title\tnoteref{label1}}
%% \tnotetext[label1]{}
%% \author{Name\corref{cor1}\fnref{label2}}
%% \ead{email address}
%% \ead[url]{home page}
%% \fntext[label2]{}
%% \cortext[cor1]{}
%% \address{Address\fnref{label3}}
%% \fntext[label3]{}

\title{Regret Minimization in Non-Zero-Sum Games with Applications to Building Champion Multiplayer Computer Poker Agents}

%% use optional labels to link authors explicitly to addresses:
%% \author[label1,label2]{<author name>}
%% \address[label1]{<address>}
%% \address[label2]{<address>}

\author{Richard Gibson}
\ead{rggibson@cs.ualberta.ca}
\ead[url]{http://cs.ualberta.ca/~rggibson/}

\address{Department of Computing Science, University of Alberta, 2-21 Athabasca Hall, Edmonton, AB, Canada, T6G 2E8. Tel: 1-780-492-2821}

\begin{abstract}
In two-player zero-sum games, if both players minimize their average external regret, then the average of the strategy profiles converges to a Nash equilibrium. 
For $n$-player general-sum games, however, theoretical guarantees for regret minimization are less understood. 
Nonetheless, Counterfactual Regret Minimization (CFR), a popular regret minimization algorithm for extensive-form games, has generated winning three-player Texas Hold'em agents in the Annual Computer Poker Competition (ACPC). 
In this paper, we provide the first set of theoretical properties for regret minimization algorithms in non-zero-sum games by proving that solutions eliminate iterative strict domination. 
We formally define \emph{dominated actions} in extensive-form games, show that CFR avoids iteratively strictly dominated actions and strategies, and demonstrate that removing iteratively dominated actions is enough to win a mock tournament in a small poker game. 
In addition, for two-player non-zero-sum games, we bound the worst case performance and show that in practice, regret minimization can yield strategies very close to equilibrium. 
Our theoretical advancements lead us to a new modification of CFR for games with more than two players that is more efficient and may be used to generate stronger strategies than previously possible. 
Furthermore, we present a new three-player Texas Hold'em poker agent that was built using CFR and a novel game decomposition method.  
Our new agent wins the three-player events of the 2012 ACPC and defeats the winning three-player programs from previous competitions while requiring less resources to generate than the 2011 winner. 
Finally, we show that our CFR modification computes a strategy of equal quality to our new agent in a quarter of the time of standard CFR using half the memory. 
\end{abstract}

\begin{keyword}
%% keywords here, in the form: keyword \sep keyword
Counterfactual Regret Minimization \sep extensive-form games \sep domination \sep computer poker \sep abstraction
%% MSC codes here, in the form: \MSC code \sep code
%% or \MSC[2008] code \sep code (2000 is the default)
\MSC 68T37
\end{keyword}

\end{frontmatter}

\maketitle
\allowdisplaybreaks
\sloppy

%%
%% Start line numbering here if you want
%%
% \linenumbers

%% main text

\section{Introduction}
\label{sec:intro}

% - 2p: find equilibrium  >2p: not so clear (no practical method for even finding Nash), but certainly do not want to play dominated strategies
% - CFR: iteratively generates strategies, emphasize that there are 2 strats (cur and avg), 2p finds equilibrium, 3p finds ''good'' strategy
% - Contributions: Show CFR avoids strategies with domination properties
% 	- define dominated actions
%		- prove CFR removes strictly dominated actions from both cur and avg, and that CFR removes strictly dominated strategies from cur
%		- new modification of CFR to generate effective, non-equilibrium strategies that require less time and memory than previous approaches

%Many real-world problems involving multiple agents can be cast as a game.
Normal-form games are a common and general framework useful for modelling problems involving single, simultaneous decisions made by multiple agents. 
When decisions are sequential and involve imperfect information or stochastic events, extensive-form games are generally more practical. 
%This paper is concerned with efficiently computing solutions to extensive-form games that are both theoretically grounded and perform well in practice. 

A common solution concept in games is a Nash equilibrium strategy profile that guarantees no player can gain utility by unilaterally deviating from the profile. 
For two-player zero-sum games, a Nash equilibrium is a powerful notion. 
In such domains, every Nash equilibrium profile results in the players earning their unique \emph{game value}, and playing a strategy belonging to a Nash equilibrium guarantees a payoff no worse than the game value. 
In $n$-player general-sum games, these strong guarantees are lost. 
Each Nash equilibrium may provide different payoffs to the players and no guarantee can be made when more than one player deviates from a specific equilibrium profile. 
Regardless, no practical algorithms are known for computing an equilibrium in even moderately-sized games with more than two players. 

Counterfactual Regret Minimization (CFR) \cite{ZinkevichEtAl2008} is a state-of-the-art algorithm for approximating Nash equilbria of large two-player zero-sum extensive-form games. 
CFR is an iterative, off-line regret minimizer that stores two strategy profiles, the \emph{current profile} that is being played at the present iteration, and the \emph{average profile} that accumulates a running average of all previous profiles generated. 
In two-player zero-sum games, the average profile approaches a Nash equilibrium and is generally used in practice, while the current profile is discarded. 
CFR can also be applied to non-zero-sum games and games with more than two players, but the average profile does not necessarily approximate an equilibrium in such cases \cite[Table 2]{AAMAS2010}. 
Previous work provides no theoretical insights into the average profile outside of two-player zero-sum games. 

Nonetheless, CFR has been applied successfully to games that are not two-player zero-sum.
For example, CFR was used to generate more aggressive, or \emph{tilted}, poker strategies from non-zero-sum games capable of defeating top poker professionals in two-player limit Texas Hold'em \cite{rgbr}. 
In addition, winning three-player Texas Hold'em poker strategies in the Annual Computer Poker Competition (ACPC) \cite{ACPC} have been constructed using CFR \cite{AAMAS2010,stitching}.  
As CFR's memory requirements are linear in the size of the game, a common approach in poker is to employ a state-space abstraction that merges different card deals into \emph{buckets}, leaving hands in the same bucket indistinguishable \cite{Gilpin&Sandholm2006,Abstraction2013}. 
Three-player limit Texas Hold'em contains over $10^{17}$ decision states, and so many hands must be merged for CFR to be feasible. 
In 2011, the winning three-player agent combated this problem through \emph{heads-up expert strategies} \cite{AAMAS2010} that merged fewer hands and only acted in common two-player scenarios resulting from one player folding early in a hand. 
While CFR has been successful in these games, a reason why CFR might be successful in such domains has not been given. 
%To combat this problem, previous winning ACPC agents employed \emph{static expert strategies} that were computed in isolation with CFR using much finer abstractions, but only acted within smaller subtrees of the full game \cite{AAMAS2010,stitching}. 
%Although these static experts introduce risks by assuming certain opponent action distributions, this approach appeared to be the most effective in practice to a number of strategy stitching alternatives \cite{stitching}. 

In this paper, we provide the first theoretical groundings for regret minimization algorithms applied to games that are not two-player zero-sum. 
This is achieved by establishing elimination of iteratively dominated errors: mistakes where there exists an alternative that is guaranteed to do better, assuming the opponents do not make such errors themselves. 
The contributions of this paper are as follows. 
%Strategies avoiding such errors belong to a superset of Nash equilibria, generalizing previous two-player zero-sum results to games with many players. 
%Firstly, we prove a seemingly overlooked result in normal-form games stating that common regret minimization techniques remove iteratively strictly dominated strategies. 
Firstly, we prove that in normal-form games, common regret minimization techniques eliminate (play with probability zero) iteratively strictly dominated strategies. 
Secondly, we formally define \emph{dominated actions} and prove that under certain conditions, CFR eliminates iteratively strictly dominated actions and strategies. 
Thirdly, for two-player non-zero-sum games, we bound the average profile's worst-case performance, providing a theoretical understanding of tilted poker strategies. %according to how far the game is from being zero-sum. 
Fourthly, our theoretical results lead us to a simple modification of CFR for games with more than two players that only uses the current profile and does not average. 
We demonstrate that with this change, CFR generates poker strategies that perform just as well as those generated without the change, but now require less time and less memory to compute. 
Furthermore, for large games requiring state-space abstraction, this reduction in memory allows finer-grained abstractions to be used by CFR, leading to even stronger strategies than previously possible.
Fifthly, we develop a new three-player limit Texas Hold'em agent that, instead of using heads-up experts, varies its abstraction quality according to the estimated \emph{importance} of each state. 
Our new agent wins the three-player events of the 2012 ACPC and defeats the previous years' champions, all while needing less computer memory to generate than the 2011 winner.  
%Finally, we demonstrate that our CFR modification can compute a similar agent in 25\% of the time using only half the RAM. 

The rest of this paper is organized as follows.  
Section \ref{sec:background} covers background material in game theory and solution concepts relevant to our work. 
Next, Section \ref{sec:regret} discusses regret minimization and provides an overview of CFR in extensive-form games. 
We then formally define dominated actions in Section \ref{sec:dom-actions} before proving our theoretical results in Section \ref{sec:theory}. 
Section \ref{sec:experiments} explores these theoretical findings and insights empirically across a number of different poker games. 
Our new champion three-player Texas Hold'em agent is then described and evaluated in Section \ref{sec:acpc}. 
Finally, Section \ref{sec:conclusion} concludes our work and discusses future research directions. 
Proof sketches are provided with the theorem statements, while full technical proofs are provided in \ref{sec:appendix}. 

\section{Games}
\label{sec:background}

%\subsection{Normal and Extensive Forms}
% Normal-form games, strategies

\subsection{Normal and Extensive Forms} 

A finite \defword{normal-form game} is a tuple $G = \langle N, A, u \rangle$ where $N = \{1, ..., n\}$ is the set of \defword{players}, $A = A_1 \times \dots \times A_n$ is the set of \defword{action profiles} with $A_i$ being the finite set of \defword{actions} available to player $i$, and $u_i: A \rightarrow \RR$ is the \defword{utility function} that denotes the payoff for player $i$ at each possible action profile. 
If $n = 2$ and $u_1 = -u_2$, the game is \defword{two-player zero-sum} (or simply \defword{zero-sum}). 
Otherwise, the game is \defword{non-zero-sum}. %\footnote{We assume non-zero-sum excludes two-player constant-sum games where $u_1 + u_2 = C$ for some constant $C$ as these can easily be translated into a zero-sum game.}. 
Two-player normal-form games are often represented by a matrix with rows denoting the row player's actions, columns denoting the column player's actions, and entries indicating utilities resulting from the row player's and column player's actions respectively. 
A \defword{mixed strategy} $\sigma_i$ for player $i$ is a probability distribution over $A_i$, where $\sigma_i(a)$ is the probability that action $a$ is taken under $\sigma_i$. 
The set of all such strategies for player $i$ is denoted $\Sigma_i$.  
Define the \defword{support} of $\sigma_i$, $\text{supp}(\sigma_i)$, to be the set of actions assigned positive probability by $\sigma_i$. 
A \defword{strategy profile} $\sigma \in \Sigma$ is a collection of strategies $\sigma = (\sigma_1, ..., \sigma_n)$, one for each player. 
We let $\sigma_{-i}$ refer to the strategies in $\sigma$ excluding $\sigma_i$, and $u_i(\sigma)$ to be the expected utility for player $i$ when players play according to $\sigma$. 

% Extensive-form games, strategies

%While normal-form games are useful for representing problems where each agent makes one simultaneous decision, 
Extensive-form games are often preferred to normal form when multiple decisions are made sequentially. 
Before providing the formal definitions, we describe Kuhn Poker, an extensive-form game that we will use as a running example throughout this paper. 
Kuhn Poker \cite{Kuhn} is a zero-sum card game played with a three-card deck containing a Jack, Queen, and King. 
Each player antes one chip and is dealt one private card at random from the deck that no other player can see.
There is a single round of betting starting with player 1, who may either check or bet one chip. 
If a bet is made, player 2 can either fold and forfeit the hand, or call the one chip bet. 
When faced with a check, player 2 can either check or bet one chip, where a bet forces player 1 to either fold or call the bet. 
If neither player folds after the round of betting, then the player with the highest ranked card wins all of the chips played. 

In general, a finite \defword{extensive-form game} with imperfect information \cite{Osborne&Rubenstein1994} is a tuple $\Gamma = \langle N, A, H, P, \sigma_c, u, \cI \rangle$ that contains a game tree with nodes corresponding to \defword{histories} of actions $h \in H$ and edges corresponding to \defword{actions} $a \in A(h)$ available to $\defword{player}$ $P(h) \in N \cup \{c\}$ (where again $N$ is the set of players and $c$ denotes \defword{chance}). 
For histories $h, h' \in H$, we call $h$ a \defword{prefix} of history $h'$, written $h \sqsubseteq h'$, if $h'$ begins with the sequence $h$. 
When $P(h) = c$, $\sigma_c(h,a)$ is the (fixed) probability of chance generating action $a$ at $h$. 
%We call $h$ a \defword{prefix} of history $h'$ if $h'$ begins with the sequence $h$. 
Terminal nodes correspond to \defword{terminal histories} $z \in Z \subseteq H$ that have associated \defword{utilities} $u_i(z)$ for each player $i$. 
We define $\Delta_i = \max_{z,z' \in Z} u_i(z) - u_i(z')$ to be the range of utilities for player $i$. 
Non-terminal histories for player $i$, $H_i$, are partitioned into \defword{information sets} $I \in \cI_i$ representing the different game states that player $i$ cannot distinguish between. 
For example, in Kuhn Poker, player $i$ does not see the private card dealt to the opponent, and thus every pair of histories differing only in the private card of the opponent are in the same information set for player $i$. 
For each $I \in \cI_i$, the action sets $A(h)$ must be identical for all $h \in I$, and we denote this set by $A(I)$. 
Define $|A(\cI_i)| = \max_{I \in \cI_i} |A(I)|$ to be the maximum number of actions available to player $i$ at any information set. 
We assume \defword{perfect recall} that guarantees players always remember information that was revealed to them, the order it was revealed, and the actions they chose. 

A \defword{behavioral strategy} for player $i$, $\sigma_i \in \Sigma_i$, is a function that maps each information set $I \in \cI_i$ to a probability distribution over $A(I)$. 
Denote $\pi^\sigma(h)$ as the probability of history $h$ occurring if all players play according to $\sigma=(\sigma_1,...,\sigma_n)$. 
We can decompose $\pi^\sigma(h) = \prod_{i \in N \cup \{c\}} \pi_i^\sigma(h)$ into each player's and chance's contribution to this probability. 
Here, $\pi_i^\sigma(h)$ is the contribution to this probability from player $i$ when playing according to $\sigma_i$. 
Let $\pi_{-i}^\sigma(h)$ be the product of all contributions (including chance) except that of player $i$. 
In addition, let $\pi^\sigma(h,h')$ be the probability of history $h'$ occurring after $h$, given that $h$ has occurred. 
Let $\pi_i^\sigma(h,h')$ and $\pi_{-i}^\sigma(h,h')$ be defined similarly. 
Furthermore, we define the probability of player $i$ reaching information set $I \in \cI_i$ as $\pi_i^\sigma(I) = \pi_i^\sigma(h)$ for any $h \in I$. %(similarly $\pi_i^\sigma(I, I') = \pi_i^\sigma(h,h')$)
This is well-defined due to perfect recall as any two histories reaching the same information set must have followed the same sequence of actions at previous, identical information sets. 

A strategy $s_i$ is \defword{pure} if a single action is assigned probability 1 at every information set; for each $I \in \cI_i$, let $s_i(I)$ be this action.  
Denote $\cS_i$ as the set of all pure strategies for player $i$. 
For a behavioral strategy $\sigma_i$, define the \defword{support} of $\sigma_i$ to be $\text{supp}(\sigma_i) = \{s_i \in \cS_i \mid \sigma_i(I, s_i(I)) > 0 \text{ for all } I \in \cI_i \}$. 
Note that normal form is a generalization of extensive form. 
An extensive-form game $\Gamma$ can be represented in normal form $G$ by setting the action set in $G$ for player $i$ to be the set of all pure strategies in $\Gamma$ and assigning utility $u_i(s) = \sum_{z \in Z} \pi^s(z) u_i(z)$. 
Then, every behavioral strategy $\sigma_i$ in $\Gamma$ has a utility-equivalent mixed strategy in $G$ where the probability of selecting $s_i$ is $\prod_{I \in \cI_i} \sigma_i(I, s_i(I))$ \cite{Kuhn53}. 
However, normal form is often impractical for even moderately-sized problems because the size of the action set in $G$ is exponential in $|\cI_i| \cdot |A(\cI_i)|$. 

\subsection{Solution Concepts}
% Dominated Strategies, including all previous work in domination stuff (Complexity of Dominance, Dynamic Programming Domination, Computing Dominance, Domination Value as motivation for not wanting to play dominated strategies)

In this paper, we consider the problem of computing a strategy profile to a game for play against a set of unknown opponents. 
The most common solution concept is the Nash equilibrium. 
For $\epsilon \geq 0$, a strategy profile $\sigma$ is an \defword{$\epsilon$-Nash equilibrium} if no player can unilaterally deviate from $\sigma$ and gain more than $\epsilon$; \ie, $\max_{\sigma_i' \in \Sigma_i} u_i(\sigma_i', \sigma_{-i}) \leq u_i(\sigma) + \epsilon$ for all $i \in N$. 
A $0$-Nash equilibrium is simply called a \defword{Nash equilibrium}. 
For games with more than two players, computing a Nash equilibrium is hard and belongs to the PPAD-complete class of problems \cite{chen:3-nash,ChenDeng2player,DaskalakisPapadimitriou,DaskalakisGoldbergPapadimitriou}.
%When $n > 2$, playing a strategy $\sigma_i$ belonging to a Nash equilibrium provides little benefit. 
%For instance, if more than one opponent deviates from the equilibrium profile, then no guarantee can be made about player $i$'s utility. 
Alternatively, we consider a superset of Nash equilibria, particularly those profiles that avoid iterative strict domination.
\begin{definition}
\label{def:domination}
A strategy $\sigma_i$ for player $i$ is a \defword{strictly dominated strategy} if there exists another player $i$ strategy $\sigma_i'$ such that $u_i(\sigma_i, \sigma_{-i}) < u_i(\sigma_i', \sigma_{-i})$ for all $\sigma_{-i} \in \Sigma_{-i}$.
\end{definition}
\noindent \defword{Weak} and \defword{very weak} dominance have also been studied that allow equality instead of strict inequality for all but one and for all opponent profiles respectively. 
For each type of dominance, an \defword{iteratively dominated strategy} is any strategy that is either dominated or becomes dominated after successively removing iteratively dominated strategies from the game. 
In this paper, we focus on strict domination where it is well-known that iterated removal of strictly dominated strategies always results in the same set of remaining strategies, regardless of the order of removal \cite{GilboaKalaiZemel}. 
%For two-player games, this set is the set of \defword{rationalizable} strategies, where a strategy is rationalizable if it is a best response to some set of rationalizable opponent strategies.\footnote{This is also true for games with more than two players if we assume that the opponents can correlate their strategies, but we do not make such an assumption here.} 

Conitzer and Sandholm \cite{DominanceComplexity} prove that a strictly dominated strategy $\sigma_i \in \Sigma_i$ in a normal-form game can be identified in time polynomial in $|A_i| = |\cS_i|$ by showing that the objective of the linear program
\begin{align}
\textbf{minimize } & \sum_{s_i \in \cS_i} p_{s_i} \label{lp:dom} \\
\textbf{subject to } & \forall s_{-i} \in \cS_{-i}, \sum_{s_i \in \cS_i} p_{s_i} u_i(s_i, s_{-i}) \geq u_i(\sigma_i, s_{-i}) \nonumber
\end{align} 
is less than 1, where each $p_{s_i}$ is a nonnegative real number.  
Iteratively strictly dominated strategies can then be eliminated by repeatedly solving this program and removing the dominated pure strategies from $\cS_i$ and $\cS_{-i}$. 
However, this method is infeasible for large extensive-form games as the linear programs would require an exponential number of constraints in the size of the game. 
Hansen \etal~\cite{HansenDynamicProgramming} develop a dynamic programming algorithm for partially observable stochastic games, a generalization of normal-form games, that removes iteratively very weakly dominated strategies, but is not practical beyond small toy problems.  
Further insights are provided by Waugh's \emph{domination value} \cite{Waugh2009} that attempts to measure the amount of utility lost through playing iteratively dominated strategies in zero-sum games. 
Waugh demonstrates a strong correlation between the domination value of a strategy with performance in a small poker game, suggesting that removal of dominated strategies is enough for good play. 
This particular work motivates our results in Section \ref{sec:theory}. 

Two other generalizations of Nash equilibria, correlated and coarse correlated equilibria, require a mechanism for correlation among the players. 
Suppose an independent moderator selects a profile $\sigma^k$ from $E = \{\sigma^1, ..., \sigma^K\}$ according to distribution $q$ and privately recommends each player $i$ play strategy $\sigma_i^k$. 
Then $(E, q)$ is a \defword{correlated equilibrium} if no player has an incentive to unilaterally deviate from any recommendation. 
%if for all $i \in N$, $\ell = 1,...,K$,
%\[ \sum_{k: \sigma_i^k = \sigma_i^\ell} q(k)u_i(\sigma^k) = \max_{\sigma_i' \in \Sigma_i} \sum_{k: \sigma_i^k = \sigma_i^\ell} q(k)u_i(\sigma_i', \sigma_{-i}^k), \]
%where $q(k)$ is the probability assigned to $\sigma^k$. 
A \defword{coarse correlated equilibrium} is similar but even more general, where for all $i \in N$, we only require that
\begin{equation}
\label{eq:cce}
\sum_{k=1}^K q(k)u_i(\sigma^k) \geq \max_{\sigma_i' \in \Sigma_i} \sum_{k=1}^K q(k) u_i(\sigma_i', \sigma_{-i}^k). 
\end{equation}
To not be in a coarse correlated equilibrium, a player would need incentive to deviate even before receiving a recommendation and the deviation must be independent of the recommendation. 
Without a mechanism for correlation, it is unclear how a practitioner should use a correlated equilibrium. 
In addition, while correlated equilibria remove dominated strategies \cite{BlumMansour2007}, a coarse correlated equilibrium may lead to the recommendation of a strictly dominated strategy. 
For example, in the normal-form game in Figure \ref{fig:cce}, $\{(A,a) = 0.5, (B,b) = 0.25, (C,b) = 0.25 \}$ is a coarse correlated equilibrium with the row player's expected utility being $5/4$%(the best single strategy for the row player earns just $1$ utility when the column player plays uniformly at random)
, yet the strictly dominated row player strategy that always plays $C$ is recommended 25\% of the time. 

\begin{figure}%
\centering
$\bordermatrix{
	  & a   & b   \cr
	A & 1,0 & 0,0 \cr
	B & 0,0 & 2,0 \cr
	C & -1,0 & 1,0 \cr}$
\caption{A two-player non-zero-sum normal-form game, where the column player's utility is always zero.}%
\label{fig:cce}%
\end{figure}

\section{Regret Minimization}
\label{sec:regret}

Given a sequence of strategy profiles $\sigma^1, ..., \sigma^T$, the \defword{(external) regret} for player $i$ is
\[ R_i^T = \max_{\sigma_i' \in \Sigma_i} \sum_{t=1}^T \left( u_i( \sigma_i', \sigma_{-i}^t ) - u_i(\sigma^t) \right). \]
$R_i^T$ measures the amount of utility player $i$ could have gained by following the single best fixed strategy in hindsight at all time steps $t = 1,...,T$. 
Theorem \ref{thm:folk} below states a well-known result that relates regret to Nash equilibria in zero-sum games:
\begin{theorem}
\label{thm:folk}
In a zero-sum game, for $\epsilon \geq 0$, if $R_i^T \leq \epsilon$ for $i=1,2$, then the average strategy profile, $\bar{\sigma}^T$ (defined later), is a $2\epsilon$-Nash equilibrium. 
\end{theorem}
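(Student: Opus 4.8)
The plan is to lean on two facts only: the linearity of each $u_i$ in a single player's mixed strategy, and the zero-sum identity $u_1 = -u_2$. The sole property of the average profile I will need is that, for any fixed deviation $\sigma_i'$, playing $\sigma_i'$ against the averaged opponent strategy returns the average of the per-iteration payoffs, i.e. $u_i(\sigma_i', \bar{\sigma}_{-i}^T) = \frac{1}{T}\sum_{t=1}^T u_i(\sigma_i', \sigma_{-i}^t)$. This holds because $u_i$ is linear in $\sigma_{-i}$ and $\bar{\sigma}_{-i}^T$ is the corresponding uniform mixture.

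First I would divide the regret bound $R_i^T \le \epsilon$ by $T$ and apply this linearity to turn it into a statement about best responses to the average profile. Writing $\bar{u}_i = \frac{1}{T}\sum_{t=1}^T u_i(\sigma^t)$ for the time-averaged payoff, the condition becomes $\max_{\sigma_i' \in \Sigma_i} u_i(\sigma_i', \bar{\sigma}_{-i}^T) \le \bar{u}_i + \epsilon/T$ for each $i$ (the $1/T$ is absorbed into $\epsilon$ when $R_i^T$ is read as an average). The crucial subtlety, which I expect to be the main obstacle, is that $u_i(\bar{\sigma}^T)$ is \emph{not} equal to $\bar{u}_i$ in general, because $u_i$ is bilinear rather than jointly linear in the two players' strategies; one therefore cannot simply substitute the averaged payoff for the value of the average profile.

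To close this gap I would invoke the zero-sum structure. Summing the two averaged payoffs gives $\bar{u}_1 + \bar{u}_2 = \frac{1}{T}\sum_{t=1}^T \left( u_1(\sigma^t) + u_2(\sigma^t) \right) = 0$, and likewise $u_1(\bar{\sigma}^T) = -u_2(\bar{\sigma}^T)$. Next I would combine player $2$'s regret bound with the trivial inequality $\max_{\sigma_2' \in \Sigma_2} u_2(\sigma_2', \bar{\sigma}_1^T) \ge u_2(\bar{\sigma}^T)$ (a best response is at least as good as $\bar{\sigma}_2^T$ itself) to obtain $u_2(\bar{\sigma}^T) \le \bar{u}_2 + \epsilon/T = -\bar{u}_1 + \epsilon/T$, hence $u_1(\bar{\sigma}^T) \ge \bar{u}_1 - \epsilon/T$. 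In words, the opponent's small regret forces the average profile to earn player $1$ nearly his time-averaged payoff.

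Finally I would subtract the two estimates. Player $1$'s gain from deviating is $\max_{\sigma_1' \in \Sigma_1} u_1(\sigma_1', \bar{\sigma}_2^T) - u_1(\bar{\sigma}^T) \le \left( \bar{u}_1 + \epsilon/T \right) - \left( \bar{u}_1 - \epsilon/T \right) = 2\epsilon/T$, and the symmetric argument bounds player $2$'s gain by the same quantity. Thus no player can improve by more than $2\epsilon/T$, which is $2\epsilon$ under the average-regret reading, so $\bar{\sigma}^T$ is a $2\epsilon$-Nash equilibrium. The argument is short; its only delicate point is that the value of the average profile must be controlled indirectly, through the \emph{opponent's} regret, rather than read off directly from $\bar{u}_i$, and this is exactly where the zero-sum hypothesis enters.
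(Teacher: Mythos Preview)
Your proof is correct and follows essentially the same route as the paper's proof of Theorem~\ref{thm:genfolk} (the paper does not prove Theorem~\ref{thm:folk} directly but cites Waugh, and Theorem~\ref{thm:genfolk} is its generalization): use linearity to rewrite each regret bound as $\max_{\sigma_i'} u_i(\sigma_i',\bar\sigma_{-i}^T)\le \bar u_i+\epsilon$, invoke zero-sum so that the time-averaged payoffs cancel, and control $u_i(\bar\sigma^T)$ via the opponent's best-response inequality. Your $2\epsilon/T$ versus $2\epsilon$ wrinkle is just the paper's slightly loose statement---Theorem~\ref{thm:folk} bounds cumulative regret by $\epsilon$ while Theorem~\ref{thm:genfolk} bounds average regret---and your bound implies the stated one since $T\ge 1$.
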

%A well-known ``folk theorem'' states that in a zero-sum game, if both players' average regret is \defword{minimized} (that is, $R_i^{T,+} / T$ approaches zero as $T \rightarrow \infty$, where ), then the average of the strategy profiles $\bar{\sigma}^T$ approaches a Nash equilibrium. 
A proof is provided by Waugh \cite[p.~11]{Waugh2009}. %\shortcite[p. 11]{Waugh2009}. 
It is also well-known that in any game, minimizing internal regret, a stronger notion of regret, leads to a correlated equilibrium, but we only consider external regret here. 

\subsection{Regret Matching and CFR}

\defword{Regret matching} \cite{RegretMatching} is a very simple, iterative procedure that minimizes average regret in a normal-form game. 
First, the initial profile $\sigma^1$ is chosen arbitrarily. 
For each action $a \in A_i$, we store the accumulated regret $R_i^T(a) = \sum_{t=1}^T \left( u_i(a, \sigma_{-i}^t) - u_i(\sigma_i^t, \sigma_{-i}^t) \right)$ that measures how much player $i$ would rather have played $a$ at each time step $t$ than follow $\sigma_i^t$. 
Successive strategies are then determined according to
\begin{equation}
\label{eq:RM}
\sigma_i^{T+1}(a) = \frac{R_i^{T,+}(a)}{\sum_{b \in A_i} R_i^{T,+}(b)},
\end{equation}
where $x^+ = \max\{x, 0\}$ and actions are chosen arbitrarily when the denominator is zero. 
One can show that
\begin{equation}
\label{eq:RMBound}
\frac{R_i^T}{T} = \max_{a \in A_i} \frac{R_i^T(a)}{T} \leq \frac{\Delta_i \sqrt{|A_i|}}{\sqrt{T}}.
\end{equation}
A general proof is provided by Gordon \cite{Gordon2007}, while a more direct proof is provided by Lanctot \cite[Theorem 2]{lanctot-thesis}. 
By Theorem \ref{thm:folk}, the \defword{average strategy profile}, defined by $\bar{\sigma}_i^T(a) = \sum_{t=1}^T \sigma_i^t(a) / T$, approaches a Nash equilibrium as $T \rightarrow \infty$. 

% CFR
Regret matching requires storage of $R_i^T(a)$ for all $a \in A_i$.  
Thus, it is infeasible to directly apply regret matching to even moderately-sized extensive-form games due to the resulting exponential size of the action (pure strategy) space. 
Alternatively, \defword{Counterfactual Regret Minimization (CFR)} \cite{ZinkevichEtAl2008} is a state-of-the-art algorithm that minimizes average regret while only requiring storage proportional to $|\cI_i| \cdot |A(\cI_i)|$ in the extensive-form game. 
Pseudocode is provided in Algorithm \ref{alg:cfr}. 
On each iteration $t$ and for each player $i$, the expected utility for player $i$ is computed at each information set $I \in \cI_i$ under the current profile $\sigma^t$, assuming player $i$ plays to reach $I$. 
This expectation is the \defword{counterfactual value} for player $i$,
\[ v_i(I, \sigma) = \sum_{z \in Z_I} u_i(z) \pi_{-i}^\sigma(z[I]) \pi^\sigma(z[I], z), \]
where $Z_I$ is the set of terminal histories passing through $I$ and $z[I]$ is the history leading to $z$ contained in $I$. 
For each action $a \in A(I)$, these values determine the \defword{counterfactual regret} at iteration $t$, $r_i^t(I,a) = v_i(I, \sigma_{(I \rightarrow a)}^t) - v_i(I, \sigma^t)$, where $\sigma_{(I \rightarrow a)}$ is the profile $\sigma$ except at $I$, action $a$ is always taken. 
The regret $r_i^t(I,a)$ measures how much player $i$ would rather play action $a$ at $I$ than follow $\sigma_i^t$ at $I$. 
These regrets are accumulated to obtain the \defword{cumulative counterfactual regret}, $R_i^T(I,a) = \sum_{t=1}^T r_i^t(I,a)$, that define the \defword{current strategy profile} via regret matching at $I$, 
\begin{equation}
\label{eq:CFRRM}
\sigma_i^{T+1}(I, a) = \frac{R_i^{T,+}(I, a)}{\sum_{b \in A(I)} R_i^{T,+}(I,b)}.
\end{equation} 
%A key result in the original CFR analysis shows that player $i$'s regret is bounded by the sum of the positive parts of the cumulative counterfactual regrets, $R_i^T \leq \sum_{I \in \cI_i} \max_{a \in A(I)} R_i^{T,+}(I,a)$ \cite[Theorem 3]{ZinkevichEtAl2008}. 
%Thus, since each $R_i^T(I,a)$ is minimized via regret matching, it follows that
This procedure minimizes average regret according to the bound
\begin{equation}
\label{eq:CFRBound}
\frac{R_i^T}{T} \leq \frac{\Delta_i |\cI_i| \sqrt{|A(\cI_i)|}}{\sqrt{T}} \text{ \cite[Theorem 4]{ZinkevichEtAl2008}}.
\end{equation}
During computation, CFR stores a \defword{cumulative profile} $s_i^T(I,a) = \sum_{t=1}^T \pi_i^{\sigma^t}(I) \sigma_i^t(I,a)$. 
Once CFR is terminated after $T$ iterations, the output is the \defword{average strategy profile} $\bar{\sigma}_i^T(I,a) = s_i^T(I,a) / \sum_{b \in A(I)} s_i^T(I,b)$. 
%One can easily check that playing $\bar{\sigma}_i^T$ is utility-equivalent to player $i$ playing one of $\{\sigma_i^1, \sigma_i^2, ..., \sigma_i^T\}$ chosen uniformly at random; \ie, $u_i(\bar{\sigma}_i^T, \sigma_{-i}) = (1/T) \sum_{t=1}^T u_i(\sigma_i^t, \sigma_{-i})$ for all $\sigma_{-i} \in \Sigma_{-i}$. 
%We will make use of this fact in our theoretical analysis later in Section \ref{sec:theory}. 

%\begin{algorithm}[t]
%  \caption{CFR \cite{ZinkevichEtAl2008}}
%  \label{alg:cfr}
%\begin{algorithmic}[1]
%  \STATE Initialize regret: $\forall I, a \in A(I):  R(I,a) \leftarrow 0$
%  \STATE Initialize cumulative profile: $\forall I, a \in A(I) : s(I,a) \leftarrow 0$
%  \STATE Initialize current profile: $\forall I, a \in A(I): \sigma(I,a) = 1 / |A(I)|$
%  \STATE {\bf for} $t \in \{1, 2, ..., T\}$ {\bf do}
%  	\ISTATE[1] {\bf for} $i \in N$ {\bf do}
%  		\ISTATE[2] {\bf for} $I \in \mathcal{I}_i$ {\bf do}
%  			\ISTATE[3] {\bf for} $a \in A(I)$ {\bf do}
%  				\ISTATE[4] $R(I,a) \leftarrow R(I,a) + v_i(I, \sigma_{(I \rightarrow a)}) - v_i(I, \sigma)$
%  				\ISTATE[4] $s(I,a) \leftarrow s(I,a) + \pi_i^\sigma(I) \sigma_i(I,a)$
%  			\ISTATE[3] {\bf end for}
%  			\ISTATE[3] $\sigma_i(I, \cdot) \leftarrow$ RegretMatching($R(I, \cdot)$) 
%  		\ISTATE[2] {\bf end for}
%  	\ISTATE[1] {\bf end for}
%  \STATE {\bf end for}
%\end{algorithmic}
%\end{algorithm} 

\begin{algorithm}[t]
  \caption{Counterfactual Regret Minimization (Zinkevich \etal~2008)}%\cite{ZinkevichEtAl2008}}
  \label{alg:cfr}
\begin{algorithmic}[1]
  \STATE Initialize regret: $\forall I, a \in A(I):  R(I,a) \leftarrow 0$
  \STATE Initialize cumulative profile: $\forall I, a \in A(I): s(I,a) \leftarrow 0$
  \STATE Initialize current profile: $\forall I, a \in A(I): \sigma(I, a) = 1 / |A(I)|$ \label{alg:background:cfr:init-cur}
  \STATE {\bf for} $t \in \{1, 2, ..., T\}$ {\bf do}
  	\ISTATE[1] {\bf for} $i \in N$ {\bf do}
  		\ISTATE[2] {\bf for} $I \in \mathcal{I}_i$ {\bf do}
  			\ISTATE[3] $\sigma_i(I, \cdot) \leftarrow$ RegretMatching($R(I, \cdot)$)
  			\ISTATE[3] {\bf for} $a \in A(I)$ {\bf do}
  				\ISTATE[4] $R(I,a) \leftarrow R(I,a) + v_i(I, \sigma_{(I \rightarrow a)}) - v_i(I, \sigma)$
  				\ISTATE[4] $s(I,a) \leftarrow s(I,a) + \pi_i^\sigma(I) \sigma_i(I,a)$
  			\ISTATE[3] {\bf end for} 
  		\ISTATE[2] {\bf end for}
  	\ISTATE[1] {\bf end for}
  \STATE {\bf end for}
\end{algorithmic}
\end{algorithm} 

% Average provably good for 2p, only empirically good for 3p
Since all players are minimizing average regret, it follows by Theorem \ref{thm:folk} that for zero-sum games, CFR's average profile converges to a Nash equilibrium. 
For non-zero-sum games, if we assign probability $1/T$ to each of the profiles $\{\sigma^1, ..., \sigma^T\}$ generated by CFR or any other regret minimizer, then by equation \eqref{eq:cce} and minimization of regret, this converges to a coarse correlated equilibrium. 
Though previous work omits this fact, it is unclear how this could be useful, let alone why the average strategy $\bar{\sigma}_i^T$ might be valuable. 
However, the average strategy has been shown to perform well empirically in non-zero-sum games against human opponents and competitors in the ACPC \cite{AAMAS2010,stitching,rgbr}. 
One of our aims in this paper is to help explain why CFR is performing well in non-zero-sum games. 

\subsection{Other Regret Minimization Concepts and Techniques}
\label{sec:related-work}

There are two other solution concepts associated with the notion of regret minimization. 
Both concepts define the regret of a strategy $\sigma_i$ to be
\[ regret_i(\sigma_i) = \max_{\substack{\sigma_i' \in \Sigma_i\\\sigma_{-i} \in \Sigma_{-i}}} u_i(\sigma_i', \sigma_{-i}) - u_i(\sigma_i, \sigma_{-i}). \]
Firstly, Renou and Schlag \cite{minimax-regret} define $\sigma^* \in \Sigma$ as a \emph{minimax regret equilibrium} relative to $\Sigma$ if
\[ regret_i(\sigma_i^*) \leq regret_i(\sigma_i) \text{ for all } \sigma_i \in \Sigma_i \text{ and all } i \in N. \]
This turns out to be an even stronger condition than Nash equilibrium, which is already hard to compute in games with more than two players. 
The authors also define the $\epsilon$\emph{-minimax regret equilibrium} variant where with probability $1 - \epsilon$ the opponents are assumed to play according to the equilibrium, and with probability $\epsilon$ no assumption is made. 
Here, the common assumption of rationality is dropped and thus $\epsilon$-minimax regret equilibria can end up playing iteratively strictly dominated strategies \cite[p.~276]{minimax-regret}. 

Secondly, Halpern and Pass \cite{IteratedRegretMin} introduce \emph{iterated regret minimization}. 
Much like iterated removal of dominated strategies, the authors iteratively remove all strategies $\sigma_i$ that do not provide minimal $regret_i(\sigma_i)$. 
They show that while the set of non-iteratively strictly dominated strategies can be disjoint from those that survive iterated regret minimization, their solutions match closely to those solutions played by real people in a number of small games. 
Our work here is less concerned with understanding how humans arrive at solutions and more concerned with understanding and advancing CFR in developing state-of-the-art game-playing agents.

\section{Dominated Actions}
\label{sec:dom-actions}

% Define dominated actions
Our contributions in this paper begin with a formal definition of \emph{dominated actions} that are specific to extensive-form games, and we relate such actions to dominated strategies. 
Dominated actions are considered in the Gambit Software Tools package and are loosely defined as actions that are ``always worse than another, regardless of the beliefs at the information set'' \cite{Gambit}. 
Here, we say an action $a$ at $I \in \cI_i$ is a strictly dominated action if there exists a strategy $\sigma_i'$ that guarantees higher counterfactual value at $I$ to any other strategy $\sigma_i$ that always plays $a$ at $I$, regardless of what the opponents play but assuming they reach $I$ with positive probability. 
The formal definition is below. 
\begin{definition}
\label{def:domact}
An action $a \in A(I)$ of an extensive-form game is a \defword{strictly dominated action} if there exists a strategy $\sigma_i' \in \Sigma_i$ such that for all profiles $\sigma \in \Sigma$ satisfying $\sum_{h \in I} \pi_{-i}^\sigma(h) > 0$, we have $v_i(I, \sigma_{(I \rightarrow a)}) < v_i(I, (\sigma_i', \sigma_{-i}))$. 
\end{definition}
\noindent %Thus, a rational opponent should never play $a$ at $I$. 
We use the counterfactual value $v_i$ instead of $u_i$ in Definition \ref{def:domact} because we are only concerned with the utility to player $i$ from $I$ onwards rather than over the entire game. 
%as otherwise $\sigma_i'$ would have to guarantee positive utility from $I$ onwards, which is simply impossible in games where all of player $i$'s utilities are negative. 
Similar to iteratively dominated strategies, we also define an \defword{iteratively strictly dominated action} as one that is either strictly dominated or becomes strictly dominated after successively removing strictly dominated actions from the players' action sets. 
Analogous to strategic dominance in Definition \ref{def:domination}, \defword{weak} and \defword{very weak} action dominance allow equality rather than strict inequality for all but one profile $\sigma$ and for all profiles respectively. 
In addition, weak and very weak action dominance do not require the condition that $\sum_{h \in I} \pi_{-i}^\sigma(h) > 0$. % since $0 = 0$ is allowed. 
%Furthermore, when a normal-form game is represented as an extensive-form game in the trivial way (with a single information set for each player), then an action in the extensive-form game is dominated if and only if the corresponding action is dominated in the normal-form game. 

% ENTER KUHN EXAMPLE STUFF HERE
For example, consider again Kuhn Poker defined in Section \ref{sec:background}. 
When player 2 is faced with a bet from player 1, calling the bet when holding the Jack is a strictly dominated action. 
This is because the Jack is the worst card and thus never wins regardless of player 1's private card. 
Similarly, folding with the King is a strictly dominated action. 
Note that a strategy that plays either of these actions with positive probability is not necessarily a strictly dominated strategy (but is a weakly dominated strategy, as Hoehn \etal~\cite{Hoehn2005} conclude) because there exist player 1 strategies that never bet. 
In addition, once these two actions are removed, one can check that player 1's action of betting with the Queen is iteratively strictly dominated. 
Since player 2 now only folds with the Jack and only calls with the King, it is strictly better for player 1 to always check with the Queen and then call a player 2 bet with probability $2/3$. 
Thus, iteratively strictly dominated actions can identify errors that iteratively strictly dominated strategies cannot.

Proposition \ref{prop:weakdom} below states a fundamental relationship between dominated actions and strategies. 
Any strategy that plays to reach information set $I$ ($\pi_i^\sigma(I) > 0$) and plays a weakly dominated action $a$ at $I$ ($\sigma_i(I,a) > 0$) is a weakly dominated strategy.
Since strictly dominated actions are also weakly dominated, it follows from Proposition \ref{prop:weakdom} that any strategy that plays a strictly dominated action is a weakly dominated strategy.
We provide a proof sketch of the proposition below, while full proofs can be found in \ref{sec:appendix}. 
\begin{prop}
\label{prop:weakdom}
If $a$ is a weakly dominated action at $I \in \cI_i$ and $\sigma_i \in \Sigma_i$ satisfies $\pi_i^\sigma(I)\sigma_i(I,a) > 0$, then $\sigma_i$ is a weakly dominated strategy. 
\end{prop}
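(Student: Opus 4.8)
The plan is to exhibit an explicit strategy $\sigma_i''$ that weakly dominates $\sigma_i$, obtained by rerouting the probability that $\sigma_i$ places on the weakly dominated action $a$. First I would record a decomposition of the total payoff around $I$: for any player-$i$ strategy $\rho_i$ and any $\sigma_{-i}$, splitting $Z$ into the terminals that pass through $I$ and those that do not, and using perfect recall to factor $\pi^{(\rho_i,\sigma_{-i})}(z) = \pi_i^{\rho_i}(I)\,\pi_{-i}^{(\rho_i,\sigma_{-i})}(z[I])\,\pi^{(\rho_i,\sigma_{-i})}(z[I],z)$ for $z \in Z_I$ (where $\pi_i^{\rho_i}(z[I]) = \pi_i^{\rho_i}(I)$ by perfect recall), which gives
\[ u_i(\rho_i,\sigma_{-i}) = \sum_{z \in Z \setminus Z_I} u_i(z)\,\pi^{(\rho_i,\sigma_{-i})}(z) + \pi_i^{\rho_i}(I)\, v_i\!\left(I,(\rho_i,\sigma_{-i})\right). \]

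Next I would define $\sigma_i''$ to agree with $\sigma_i$ at every information set that is neither $I$ nor below $I$, and to follow $\sigma_i'$ from $I$ onward on exactly the fraction $\sigma_i(I,a)$ of the time that $\sigma_i$ would have selected $a$; formally this is the mixture that keeps the branches $b \neq a$ of $\sigma_i$ intact but replaces the $a$-branch by $\sigma_i'$, which perfect recall guarantees is realizable as a genuine behavioral strategy (Kuhn's theorem). Two bookkeeping facts make the construction clean: reaching $I$ depends only on play strictly above $I$, where $\sigma_i'' = \sigma_i$, so $\pi_i^{\sigma_i''}(I) = \pi_i^{\sigma_i}(I)$; and every $z \in Z \setminus Z_I$ is reached only through information sets off the $I$-subtree (here I would invoke the all-or-none structure that perfect recall imposes on player-$i$ information sets relative to $I$), so the first sum is identical for $\sigma_i''$ and $\sigma_i$. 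Combining these with the linearity of $v_i(I,\cdot)$ in the action taken at $I$, namely $v_i(I,(\sigma_i,\sigma_{-i})) = \sum_{b} \sigma_i(I,b)\, v_i(I,\sigma_{(I\rightarrow b)})$, yields
\[ u_i(\sigma_i'',\sigma_{-i}) - u_i(\sigma_i,\sigma_{-i}) = \pi_i^{\sigma_i}(I)\,\sigma_i(I,a)\left[\, v_i(I,(\sigma_i',\sigma_{-i})) - v_i(I,\sigma_{(I\rightarrow a)})\,\right]. \]

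Weak action dominance of $a$ makes the bracket nonnegative for every $\sigma_{-i}$, and the prefactor $\pi_i^{\sigma_i}(I)\sigma_i(I,a)$ is positive by hypothesis, so $\sigma_i''$ is at least as good as $\sigma_i$ against all opponents. The main obstacle is the \emph{strict} inequality that weak domination additionally requires: the definition of weak action dominance supplies a single profile $\sigma^{\ast}$ with $v_i(I,\sigma^{\ast}_{(I\rightarrow a)}) < v_i(I,(\sigma_i',\sigma^{\ast}_{-i}))$, but $\sigma^{\ast}$ also prescribes player $i$'s continuation below $a$, which need not coincide with the continuation already baked into our fixed $\sigma_i$. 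Since the bracket above uses $\sigma_i$'s own continuation after $a$, I would have to argue that the opponent component $\sigma^{\ast}_{-i}$ still produces a strict gain when paired with $\sigma_i$'s play, \ie\ that the strictly-improving profile can be chosen compatibly with the behaviour $\sigma_i$ fixes below $I$. Transferring strictness from the action level to the strategy level is the delicate step, and it is precisely where the weak case needs more care than the strict one: for a strictly dominated action the bracket is positive for every $\sigma_{-i}$ that reaches $I$, so strictness follows immediately once any opponent profile reaching $I$ is exhibited.
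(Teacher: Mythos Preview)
Your construction is exactly the paper's: it too builds $\sigma_i''$ by leaving $\sigma_i$ unchanged outside the subtree at $I$ and, within that subtree, mixing $\sigma_i'$ (with weight $\sigma_i(I,a)$) with the renormalized $\sigma_i$ that drops action $a$. The paper's weak-inequality chain is the same computation you outline, ending with $u_i(\sigma_i,\sigma_{-i}) \le u_i(\sigma_i'',\sigma_{-i})$ for every $\sigma_{-i}$.

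On the strict part, the paper does precisely what you flag as delicate: it simply \emph{asserts} at the outset, as a restatement of weak action dominance, that ``there exists an opponent profile $\sigma_{-i}'$ such that $v_i(I,(\sigma_{i(I\to a)},\sigma_{-i}')) < v_i(I,(\sigma_i',\sigma_{-i}'))$'' --- i.e.\ a strict witness using \emph{your} $\sigma_i$'s continuation after $a$ --- and then says ``a similar argument'' yields the strict inequality $u_i(\sigma_i,\sigma_{-i}') < u_i(\sigma_i'',\sigma_{-i}')$. The paper does not justify why the strict witness from Definition~\ref{def:domact} (which is a full profile $\sigma^\ast$ and hence may carry a different player-$i$ continuation below $a$) can be taken with the given $\sigma_i$. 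So the subtlety you identify is real and is not resolved by the paper's proof; the paper effectively reads the definition of weak action dominance as quantifying the strict witness over $\sigma_{-i}$ alone. If you adopt that reading, your displayed identity
\[
u_i(\sigma_i'',\sigma_{-i}) - u_i(\sigma_i,\sigma_{-i}) = \pi_i^{\sigma_i}(I)\,\sigma_i(I,a)\bigl[v_i(I,(\sigma_i',\sigma_{-i})) - v_i(I,\sigma_{(I\to a)})\bigr]
\]
immediately gives both the weak and the strict inequalities, and your proof is complete and matches the paper's.
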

\begin{proof-sketch} 
By definition of action dominance, there exists a strategy $\sigma_i' \in \Sigma_i$ such that $v_i(I, \sigma_{(I \rightarrow a)}) \leq v_i(I, (\sigma_i', \sigma_{-i})$ for all opponent profiles $\sigma_{-i} \in \Sigma_{-i}$. 
One can then construct a strategy $\sigma_i''$ that follows $\sigma_i$ everywhere except within the subtree rooted at $I$, where instead we follow a mixture of $\sigma_i$ and $\sigma_i'$. 
The weight in this mixture assigned to $\sigma_i'$ is $( 1 - \sigma_i(I,a) ) > 0$. 
The strategy $\sigma_i$ is then weakly dominated by $\sigma_i''$. \qed
\end{proof-sketch}

\begin{figure}[t]%
\centering
\includegraphics[width=0.98\textwidth]{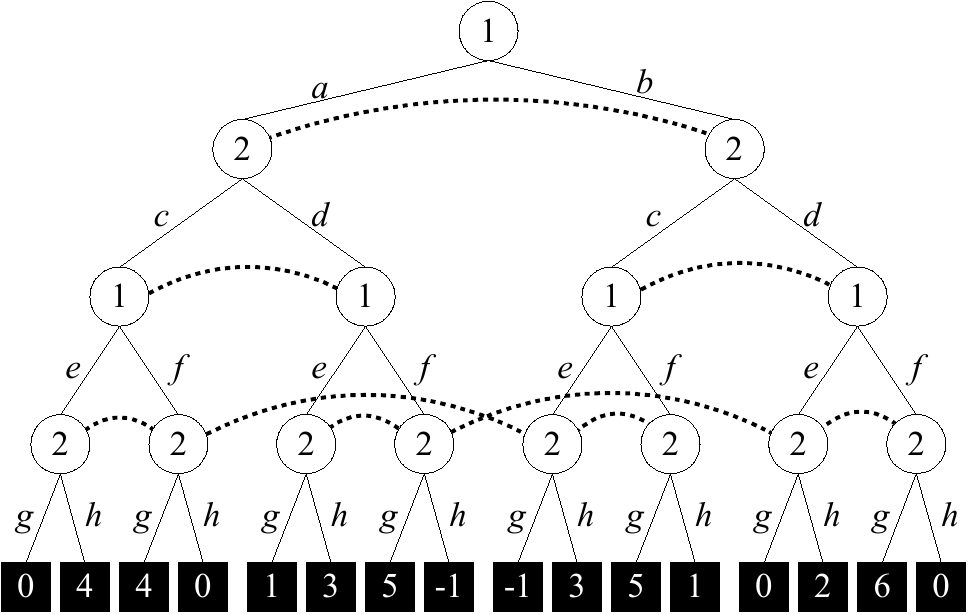}%
\caption{A zero-sum extensive-form game with strictly dominated strategies, but no strictly or weakly dominated actions. Nodes connected by a dashed line are in the same information set. Terminal values indicate utilities for player 1.}%
\label{fig:NoDomActs}%
\end{figure}

It is possible, however, for a dominated strategy to not play any dominated actions. 
For example, consider the zero-sum extensive-form game in Figure \ref{fig:NoDomActs} where both players take two private actions. 
The pure strategy for player 1 of playing $b$ and then $e$ is strictly dominated by the pure strategy that plays $a$ and then $e$ because the latter strategy guarantees exactly 1 more utility than the former, regardless of how player 2 plays. 
Similarly, the pure strategy that plays $a$ and then $f$ is strictly dominated by the pure strategy that plays $b$ and then $f$. 
However, no action is even weakly dominated. 
For instance, after playing $a$ (or $b$), the utility player 1 receives for playing $e$ can be greater, equal to, or less than the utility for playing $f$ depending on how player 2 plays. 

\section{Theoretical Analysis}
\label{sec:theory}

Clearly, one should never play a strictly dominated action or strategy as there always exists a better alternative. 
Furthermore, if we make the common assumption that our opponents are rational and do not play strictly dominated actions or strategies themselves, then we should never play iteratively strictly dominated actions or strategies. 
In zero-sum games, CFR converges to a Nash equilibrium, and so the average profile is guaranteed to eliminate strictly dominated strategies. 
For non-zero-sum games, however, Abou Risk and Szafron \cite{AAMAS2010} demonstrate that CFR may not converge to a Nash equilibrium. 
In this section, we provide theoretical evidence that CFR does eliminate (\ie, play with probability zero) strictly dominated actions and strategies. 
%Waugh's experiments with domination value \cite{Waugh2009} suggest that this may be a reason why CFR solutions perform well in non-zero-sum games. 

% Normal-form games proof
We begin by showing that in normal-form games, a class of regret minimization algorithms, including regret matching, all remove iteratively strictly dominated strategies. 
This is a simple result that, to our knowledge, was previously unknown. 
Recall that the support of a strategy $\sigma_i$, $\text{supp}(\sigma_i)$, is the set of actions assigned positive probability by $\sigma_i$. 
\begin{theorem}
\label{thm:normaldom}
Let $\sigma^1, \sigma^2, ...$ be a sequence of strategy profiles in a normal-form game where all players' strategies are computed by regret minimization algorithms where for all $i \in N$, $a \in A_i$, if $R_i^T(a) < 0$ and $R_i^T(a) < \max_{b \in A_i} R_i^T(b)$, then $\sigma_i^{T+1}(a) = 0$. 
If $\sigma_i$ is an iteratively strictly dominated strategy, then there exists an integer $T_0$ such that for all $T \geq T_0$, $\text{\emph{supp}}(\sigma_i) \nsubseteq \text{\emph{supp}}(\sigma_i^T)$. 
\end{theorem}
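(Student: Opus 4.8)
The plan is to work directly with the regret-weighted quantity $\sum_{a \in A_i}\sigma_i(a)R_i^T(a)$ associated with the dominated strategy, rather than trying to locate a single dominated pure strategy inside $\text{supp}(\sigma_i)$. This detour is essential: a strictly dominated \emph{mixed} strategy need not contain any (iteratively) strictly dominated pure strategy in its support. For example, if two undominated actions each do well against a different opponent action, their even mixture can be strictly dominated by a third undominated action, so every action in the support survives. Hence the result cannot be reduced to pure-strategy elimination, and I must argue about $\sigma_i$ as a whole.

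The engine is the identity $\sum_{a}\sigma_i(a)R_i^T(a)=\sum_{t=1}^T\bigl(u_i(\sigma_i,\sigma_{-i}^t)-u_i(\sigma_i^t,\sigma_{-i}^t)\bigr)$, together with the analogous identity for a dominating strategy $\sigma_i'$. Subtracting gives $\sum_{a}(\sigma_i'(a)-\sigma_i(a))R_i^T(a)=\sum_{t=1}^T\bigl(u_i(\sigma_i',\sigma_{-i}^t)-u_i(\sigma_i,\sigma_{-i}^t)\bigr)$. If $\sigma_i'$ strictly dominates $\sigma_i$ by a uniform margin $\delta>0$ against the relevant opponent profiles, the right-hand side grows like $\delta T$, whereas $\sum_{a}\sigma_i'(a)R_i^T(a)\le\max_a R_i^T(a)=R_i^T$, which is $o(T)$ because the algorithm minimizes regret (for regret matching this is the bound \eqref{eq:RMBound}). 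Therefore $\sum_{a}\sigma_i(a)R_i^T(a)\le R_i^T-\delta T\to-\infty$, so for large $T$ the minimizing action $a^\ast\in\text{supp}(\sigma_i)$ satisfies $R_i^T(a^\ast)<0$; the same bookkeeping yields $\max_b R_i^T(b)\ge\sum_b\sigma_i'(b)R_i^T(b)\ge R_i^T(a^\ast)+\delta T>R_i^T(a^\ast)$. Both hypotheses of the algorithm's elimination property hold for $a^\ast$, so $\sigma_i^{T+1}(a^\ast)=0$ and $\text{supp}(\sigma_i)\nsubseteq\text{supp}(\sigma_i^{T+1})$. This disposes of the non-iterative case (where $\delta$ is available against all of $\Sigma_{-i}$).

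For the iterative case I would induct on the round $k$ at which a strategy is removed, proving the stronger statement that every pure strategy eliminated by round $k$ is played with probability zero by all players after some finite time $T_k$; note this is exactly the conclusion above applied to a degenerate (pure) $\sigma_i$. The point of strengthening to pure strategies is that ``opponent $j$ concentrates on the round-$(k-1)$ survivors'' means precisely that $\sigma_j^t$ assigns zero probability to every earlier-eliminated action. Given $T_{k-1}$ from the inductive hypothesis, for $t\ge T_{k-1}$ every profile $\sigma_{-i}^t$ is supported on the round-$(k-1)$ survivors, so the margin $\delta=\min\{u_i(\sigma_i',\sigma_{-i})-u_i(\sigma_i,\sigma_{-i})\}$ over that compact polytope (a minimum of an affine function, attained at a pure survivor profile, hence strictly positive by iterated strict dominance) applies to each such term. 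Splitting the sum at $T_{k-1}$ and bounding the finite prefix below by $-C$ with $C=(T_{k-1}-1)\Delta_i$ reruns the linear-versus-sublinear comparison with $\delta T$ replaced by $\delta(T-T_{k-1}+1)-C$, which still diverges. Finiteness of the game gives finitely many rounds, so only finitely many thresholds are combined, and a final application to the mixed $\sigma_i$ of interest finishes.

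The main obstacle is exactly this coupling in the iterative step: the algorithm's elimination property is stated per action, yet the dominance margin for $\sigma_i$ only becomes usable once the opponents have genuinely stopped playing their previously eliminated actions. Making this rigorous requires the inductive ``burn-in'' argument above, carefully tracking the constant $C$ contributed by the pre-convergence prefix and confirming that the positive gap $\delta$ eventually dominates both $C$ and the sublinear term $R_i^T$ for all large $T$. A minor point to check is that $R_i^T=o(T)$ holds for the whole class of algorithms in the hypothesis; this is simply the defining no-regret guarantee.
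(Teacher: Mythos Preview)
Your proposal is correct and follows essentially the same route as the paper's proof: both work with the weighted regret $\sum_a \sigma_i(a)R_i^T(a)$, subtract the analogous expression for a dominating $\sigma_i'$ to produce a linearly decreasing term, bound $\sum_a \sigma_i'(a)R_i^T(a)\le\max_a R_i^T(a)\le R_i^{T,+}=o(T)$, and then induct on the finitely many previously removed \emph{pure} strategies (the paper cites Conitzer--Sandholm for the reduction to pure removals, which you use implicitly). Your explicit remark that a strictly dominated mixed strategy need not contain a dominated pure action in its support is a helpful motivational point the paper leaves unstated, but the mechanics of the two arguments---including the prefix/burn-in split and the verification that $R_i^T(a^\ast)<\max_b R_i^T(b)$---are the same.
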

\begin{proof-sketch} 
For the non-iterative dominance case, by strict domination of $\sigma_i$, there exists another strategy $\sigma_i' \in \Sigma_i$ such that 
\[ \epsilon = \min_{a_{-i} \in A_{-i}} u_i( \sigma_i', a_{-i} ) - u_i( \sigma_i, a_{-i} ) > 0. \]
One can then show that there exists an action $a \in \text{supp}(\sigma_i)$ such that
\[ R_i^T(a) \leq -\epsilon T + \max_{b \in A_i} R_i^T(b) \leq -\epsilon T + R_i^{T,+}. \]
Since $R_i^{T,+} / T \rightarrow 0$ as $T \rightarrow \infty$, it follows that $R_i^T(a) < 0$ after some finite number of iterations $T_0$. 
By our assumption, this implies $a \notin \text{supp}(\sigma_i^T)$ for all $T \geq T_0$ as desired. 
Using the fact that new iterative dominances only arise from removing actions and never from removing mixed strategies \cite{DominanceComplexity}, iterative dominance is proven by induction on the finite number of iteratively dominated pure strategies that must first be removed to exhibit domination of $\sigma_i$. \qed
\end{proof-sketch}

Note that regret matching is a regret minimization algorithm that satisfies the conditions required by Theorem \ref{thm:normaldom}, as long as when the denominator of equation \eqref{eq:RM} is zero, we choose $\sigma_i^{T+1}(a) = 0$ when $R_i^T(a) < \max_{b \in A_i}R_i^T(b)$. 
% TODO: Fix this bit below with new theorem statement?
Also, if a pure strategy $s_i(a) = 1$ is iteratively strictly dominated, then Theorem \ref{thm:normaldom} implies that $\sigma_i^T$ never plays action $a$ after a finite number of iterations. 
%However, we cannot guarantee that all iteratively strictly dominated strategies will be eliminated after a finite number of iterations. 
%This is because a strategy could be strictly dominated by only an infinitesimal amount and could require a near infinite number of iterations to remove. 
%Regardless, if one wanted to simply find the strategies in a normal-form game that avoid iterative strict domination, one can repeatedly solve the linear program \eqref{lp:dom} to do so in polynomial time. 
%Throughout the remainder of this paper, we assume that we always make such a choice. 

% Prove removal of strictly dominated actions
%We now turn our attention to CFR and removal of iteratively strictly dominated actions and strategies in extensive-form games. 
We now turn our attention to extensive-form games, which are our primary concern. 
Here, the linear program \eqref{lp:dom} cannot be applied to find non-iteratively strictly dominated strategies in even moderately-sized extensive-form games as the programs would require a number of constraints exponential in the size of the game. 
On the other hand, we can apply CFR. % to such games. 
%We can show, under appropriate assumptions, that CFR removes iteratively strictly dominated actions and strategies in extensive-form games. 
%Before proving this, we need an additional lemma. 

First, we consider the removal of iteratively strictly dominated actions. 
Our results rely on two conditions. 
Let $x^T$ be the number of iterations $t$ where $\sum_{a \in A(I)} R_i^{t,+}(I,a) = 0$ for some $i \in N$ and $I \in \cI_i$, $1 \leq t \leq T$. 
The first condition we require is that $x^T$ be sublinear in $T$. 
Intuitively, this is necessary because otherwise, the denominator of equation \eqref{eq:CFRRM} is zero too often, and so regret matching too often yields an arbitrary strategy at some $I \in \cI_i$ that potentially plays a dominated action. 
While we cannot prove that this condition always holds, we show empirically that $x^T / T$ decreases over time in the next section. 
Next, for $I \in \cI_i$ and $\delta \geq 0$, define $\Sigma_\delta(I) = \{ \sigma \in \Sigma \mid \sum_{h \in I} \pi_{-i}^\sigma(h) \geq \delta \}$ to be the set of profiles where the probability that the opponents play to reach $I$, $\sum_{h \in I} \pi_{-i}^\sigma(h)$, is at least $\delta$. 
The second condition we require is that the opponents reach each information set $I$ containing a dominated action \emph{often enough}, meaning that there exist real numbers $\delta, \gamma > 0$ and an integer $T'$ such that for all $T \geq T'$, $| \Sigma_{\delta}(I) \cap \{ \sigma^t \mid T' \leq t \leq T \} | \geq \gamma T$. 
This condition appears necessary because the magnitude of the counterfactual regret $|r_i^t(I,a)| = |v_i(I, \sigma_{(I \rightarrow a)}^t) - v_i(\sigma^t)| \leq \Delta_i\sum_{h \in I} \pi_{-i}^{\sigma^t}(h)$ is weighted by the probability of the opponents reaching $I$. 
Thus, if the opponents reach $I$ with probability zero, then we will stop \emph{learning} how to adjust our strategy. 
Since it could take several iterations to eliminate an iteratively strictly dominated action, we may end up stuck playing such an action when $I$ is not reached by the opponents often enough. \begin{theorem}
\label{thm:domacts}
Let $\sigma^1, \sigma^2, ...$ be strategy profiles generated by CFR in an extensive-form game, let $I \in \cI_i$, and let $a$ be an iteratively strictly dominated action at $I$, where removal in sequence of the iteratively strictly dominated actions $a_1, ..., a_k$ at $I_1, ..., I_k$ respectively yields iterative dominance of $a_{k+1} = a$. 
If for $1 \leq \ell \leq k+1$, there exist real numbers $\delta_\ell, \gamma_\ell > 0$ and an integer $T_\ell$ such that for all $T \geq T_\ell$, $| \Sigma_{\delta_\ell}(I_\ell) \cap \{ \sigma^t \mid T_\ell \leq t \leq T \} | \geq \gamma_\ell T$, then
\begin{itemize}
\item[\emph{(i)}] there exists an integer $T_0$ such that for all $T \geq T_0$, $R_i^T(I,a) < 0$,
\item[\emph{(ii)}] if $\lim_{T \rightarrow \infty} x^T / T = 0$, then $\lim_{T \rightarrow \infty} y^T(I,a) / T = 0$, where $y^T(I,a)$ is the number of iterations $1 \leq t \leq T$ satisfying $\sigma^t(I,a) > 0$, and
\item[\emph{(iii)}] if $\lim_{T \rightarrow \infty} x^T / T = 0$, then $\lim_{T \rightarrow \infty} \pi_i^{\bar{\sigma}^T}(I) \bar{\sigma}_i^T(I,a) = 0$. 
\end{itemize}
\end{theorem}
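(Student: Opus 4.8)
The plan is to establish (i) first—that the cumulative counterfactual regret $R_i^T(I,a)$ is eventually negative—and then to derive (ii) and (iii) from (i) together with the regret-matching update \eqref{eq:CFRRM} and the way the average profile is accumulated. The engine for (i) is a uniform dominance gap obtained by compactness. In the non-iterative case ($k=0$, so $a$ is directly strictly dominated at $I$ by some $\sigma_i' \in \Sigma_i$), note that $v_i(I,\cdot)$ is continuous (multilinear) in the profile and that $\Sigma_{\delta}(I)$ is a closed, hence compact, subset of $\Sigma$. Therefore $v_i(I,(\sigma_i',\sigma_{-i})) - v_i(I,\sigma_{(I \rightarrow a)})$ attains a positive minimum $\epsilon>0$ over $\Sigma_{\delta}(I)$ with $\delta=\delta_{k+1}$, by strictness of the domination.

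I would then decompose
\begin{align*}
R_i^T(I,a) = {}& \sum_{t=1}^T \big( v_i(I,\sigma_{(I \rightarrow a)}^t) - v_i(I,(\sigma_i',\sigma_{-i}^t)) \big) \\
& {}+ \sum_{t=1}^T \big( v_i(I,(\sigma_i',\sigma_{-i}^t)) - v_i(I,\sigma^t) \big).
\end{align*}
The second sum is bounded above by the full counterfactual regret at $I$ of deviating to the fixed strategy $\sigma_i'$ inside the subtree rooted at $I$; by the CFR analysis behind \eqref{eq:CFRBound} this splits into immediate counterfactual regrets over $I$ and its descendants, each $O(\sqrt{T})$ under regret matching, so the second sum is $O(\sqrt{T})$. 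In the first sum each term is $\leq -\epsilon$ when $\sigma^t \in \Sigma_{\delta}(I)$, is $\leq 0$ when the opponents reach $I$ with smaller positive probability (strict domination), and is $0$ when they do not reach $I$ (both counterfactual values vanish). The reach hypothesis supplies at least $\gamma_{k+1}T$ iterations in $\Sigma_{\delta}(I)$, so the first sum is $\leq -\epsilon\gamma_{k+1}T$, giving $R_i^T(I,a) \leq -\epsilon\gamma_{k+1}T + O(\sqrt{T}) < 0$ past some $T_0$.

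For the iterative case I would induct on $\ell = 1,\dots,k+1$, applying the same decomposition at each $I_\ell$ but with the witness $\sigma_i'$ that dominates $a_\ell$ only in the reduced game with $a_1,\dots,a_{\ell-1}$ removed. The inductive hypothesis $R_i^T(I_j,a_j)<0$ ($j<\ell$) feeds \eqref{eq:CFRRM}: once a removed action's cumulative regret is negative its probability is zero on the next iteration, so the current profile agrees with a reduced-game profile on all but a controlled set of ``bad'' iterations, and on the good iterations the reduced-game gap applies. I expect this to be the main obstacle: the gap for $a_\ell$ holds only where the earlier removed actions carry probability zero, and this fails precisely on iterations where the regret-matching denominator vanishes at some $I_j$ (the events counted by $x^T$). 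The delicate point is to keep the contribution of these bad iterations below the linear $-\epsilon\gamma_\ell T$ term; I anticipate this needs either the zero-denominator tie-breaking convention noted after Theorem \ref{thm:normaldom} (bounding the bad iterations by a constant) or the sublinearity of $x^T$, which is exactly why (ii) and (iii) carry the hypothesis $\lim_{T\rightarrow\infty} x^T/T = 0$.

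Given (i), parts (ii) and (iii) follow cleanly. For (ii), once $R_i^T(I,a)<0$ for $T\geq T_0$, the update \eqref{eq:CFRRM} forces $\sigma_i^{t+1}(I,a)=0$ on every iteration whose denominator $\sum_{b}R_i^{t,+}(I,b)$ is nonzero; hence the iterations with $\sigma^t(I,a)>0$ number at most $T_0+1$ plus the zero-denominator iterations at $I$, so $y^T(I,a) \leq T_0+1+x^T$ and $y^T(I,a)/T \rightarrow 0$ whenever $x^T/T\rightarrow 0$. For (iii), I would use the realization-weight identity $\pi_i^{\bar{\sigma}^T}(I)\,\bar{\sigma}_i^T(I,a) = s_i^T(I,a)/T = \tfrac{1}{T}\sum_{t=1}^T \pi_i^{\sigma^t}(I)\sigma_i^t(I,a)$, which follows by induction along the sequence of player-$i$ actions leading to $I$ from the definitions of $\bar{\sigma}_i^T$ and $s_i^T$ (using $\sum_b s_i^T(I,b)=\sum_t \pi_i^{\sigma^t}(I)$). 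Since each summand lies in $[0,1]$ and vanishes unless $\sigma^t(I,a)>0$, this average is at most $y^T(I,a)/T$, which tends to $0$ by (ii).
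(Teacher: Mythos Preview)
Your proposal is correct and follows essentially the same approach as the paper: the same decomposition of $R_i^T(I,a)$ into a dominance-gap sum and a full counterfactual regret term bounded via Lemma~\ref{lem:fullcfr}/Corollary~\ref{cor:minfullcfr} by $O(\sqrt{T})$, the same compactness argument on $\Sigma_\delta(I)$ for the uniform gap $\epsilon$, the same strong induction on $k$, and the same derivations of (ii) from (i) via \eqref{eq:CFRRM} and of (iii) from (ii) via the realization-weight identity. The obstacle you flag in the inductive step is exactly the one the paper handles, and it resolves it by your second option---invoking the inductive part (ii) (hence implicitly $x^T/T\to 0$) to make $\sum_{\ell} y^T(I_\ell,a_\ell) = o(T)$.
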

\begin{proof-sketch} 
Similar to the proof of Theorem \ref{thm:normaldom}, there exists an $\epsilon > 0$ and a term $F$ such that 
\[ R_i^T(I,a) \leq -\epsilon \gamma T + F \text{ where } \lim_{T \rightarrow \infty} \frac{F}{T} = 0. \] 
Again, this implies that there exists an integer $T_0$ such that for all $T \geq T_0$, $R_i^T(I,a) < 0$, establishing part (i). 
Since CFR applies regret matching at $I$, part (i) and equation \eqref{eq:RM} imply that for all $T \geq T_0$, either $\sum_{b \in A(I)} R_i^{T,+}(I,b) = 0$ or $\sigma_i^{T+1}(I,a) = 0$. 
From this, we have
\[ \lim_{T \rightarrow \infty} \frac{y^T(I,a)}{T} \leq \lim_{T \rightarrow \infty} \frac{y^{T_0}(I,a) + x^T}{T} = 0, \]
proving part (ii). 
Finally, part (iii) follows according to
\[ \lim_{T \rightarrow \infty} \pi_i^{\bar{\sigma}^T}(I) \bar{\sigma}_i^T(I,a) = \lim_{T \rightarrow \infty} \frac{\sum_{t=1}^T \pi_i^{\sigma^t}(I) \sigma_i^t(I,a)}{T} \leq \lim_{T \rightarrow \infty} \frac{y^T(I,a)}{T} = 0,\]
where the first equality is by the definition of the average strategy and the inequality is by definition of $y^T(I,a)$. \qed
\end{proof-sketch}

Part (iii) of Theorem \ref{thm:domacts} says that an iteratively strictly dominated action is not reached or is removed from the average profile $\bar{\sigma}^T$ in the limit, whereas part (i) suggests that iteratively strictly dominated actions are removed from the \emph{current} profile $\sigma^T$ after just a finite number of iterations (except possibly when $\sum_{a \in A(I)} R_i^{T,+}(I,a) = 0$). 
Finally, part (ii) states that the number of current profiles that play an iteratively strictly dominated action $a$ at $I$, $y^T(I,a)$, is sublinear in $T$. 

% Prove removal of (iteratively) strictly dominated strategies from the current profile (well, not quite because of default action probabilities, so probably need some condition on frequency default probs used)
%Nonetheless, Theorem \ref{thm:domacts} (iii) suggests that CFR's average profile avoids iteratively strictly dominated actions in the limit. 
%By Proposition \ref{prop:weakdom}, CFR appears to avoid a class of weakly dominated strategies, specifically those containing strictly dominated actions. 
Next, we show that the profiles generated by CFR eliminate all iteratively strictly dominated strategies, assuming again that $x^T / T \rightarrow 0$. %the number of iterations $T$ where $\sum_{a \in A(I)} R_i^{T,+}(I,a) = 0$ for some $I \in \cI_i$ is sublinear in $T$.   
\begin{theorem}
\label{thm:domstrats}
Let $\sigma^1, \sigma^2, ...$ be strategy profiles generated by CFR in an extensive-form game, and let $\sigma_i$ be an iteratively strictly dominated strategy. Then,
\begin{itemize}
\item[\emph{(i)}] there exists an integer $T_0$ such that for all $T \geq T_0$, there exist $I \in \cI_i$, $a \in A(I)$ such that $\pi_i^\sigma(I)\sigma_i(I,a) > 0$ and $R_i^T(I,a) < 0$, and
\item[\emph{(ii)}] if $\lim_{T \rightarrow \infty} x^T / T = 0$, then $\lim_{T \rightarrow \infty} y^T(\sigma_i) / T = 0$, where $y^T(\sigma_i)$ is the number of iterations $1 \leq t \leq T$ satisfying $\text{supp}(\sigma_i) \subseteq \text{supp}(\sigma_i^t)$.
\end{itemize} 
\end{theorem}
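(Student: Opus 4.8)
The plan is to reduce both parts to a single exact accounting identity for the cumulative counterfactual regrets along the support of $\sigma_i$, and then combine it with the domination gap and with the regret-matching rule. I first record the tool. For the fixed dominated strategy $\sigma_i$ and the CFR profiles $\sigma^t$, set $D^T(\sigma_i) = \sum_{t=1}^T \left( u_i(\sigma_i, \sigma_{-i}^t) - u_i(\sigma^t) \right)$. Working bottom-up through the game tree and expanding $v_i(I, \sigma^t_{(I \rightarrow a)})$ one information set at a time, I would establish the exact decomposition
\[ D^T(\sigma_i) = \sum_{I \in \cI_i} \pi_i^{\sigma_i}(I) \sum_{a \in A(I)} \sigma_i(I,a) R_i^T(I,a), \]
where the player-$i$ reach weights $\pi_i^{\sigma_i}(I)$ arise as the telescoping product of the $\sigma_i(I,a)$ along each path (the recursion at $I$ is $V^T(I) = \sum_a \sigma_i(I,a)[R_i^T(I,a) + \sum_{I'} V^T(I')]$ over successor information sets $I'$). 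This is the crucial step: it expresses the \emph{global} regret of playing $\sigma_i$ purely in terms of the \emph{local} cumulative counterfactual regrets $R_i^T(I,a)$ at the $(I,a)$ pairs in the support of $\sigma_i$, and note that $\pi_i^{\sigma_i}(I) = \pi_i^{\sigma}(I)$ since player $i$'s reach depends only on $\sigma_i$.

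For part (i) in the base (non-iterative) case, strict domination of $\sigma_i$ by some $\sigma_i'$ gives, exactly as in the proof of Theorem \ref{thm:normaldom}, a uniform gap $\epsilon = \min_{\sigma_{-i} \in \Sigma_{-i}} \left( u_i(\sigma_i', \sigma_{-i}) - u_i(\sigma_i, \sigma_{-i}) \right) > 0$; the minimum is attained and positive because the difference is multilinear on the compact set $\Sigma_{-i}$. Summing over $t$ and using $D^T(\sigma_i') \leq R_i^T$ together with the CFR bound \eqref{eq:CFRBound}, I obtain $D^T(\sigma_i) \leq R_i^T - \epsilon T = -\epsilon T + o(T)$. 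Feeding this into the decomposition forces $\sum_I \pi_i^{\sigma_i}(I) \sum_a \sigma_i(I,a) R_i^T(I,a) < 0$ for all large $T$, so at least one summand is negative; since every summand has $\pi_i^{\sigma_i}(I)\sigma_i(I,a) > 0$, this produces an $(I,a)$ with $\pi_i^{\sigma}(I)\sigma_i(I,a) > 0$ and $R_i^T(I,a) < 0$, which is exactly part (i). For iterative domination I would induct on the number of pure strategies that must be deleted before $\sigma_i$ becomes strictly dominated, using the order-independence of iterated strict removal and the fact that it may be restricted to pure strategies. The gap $\epsilon$ now holds only over the reduced opponent space, so I must show the CFR profiles place asymptotically negligible weight on the deleted opponent strategies; by the inductive hypothesis each such strategy lies in the support of its current profile on only sublinearly many iterations, and on every other iteration regret matching assigns it probability exactly zero, after which a union bound and multilinearity of $u_i$ degrade the gap by only an $o(T)$ term, restoring $D^T(\sigma_i) \leq -\epsilon T + o(T)$.

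Part (ii) follows from part (i) almost mechanically. Fix $T_0$ from part (i). For any $t > T_0$, part (i) applied at $T = t-1$ yields $(I,a)$ with $\pi_i^{\sigma_i}(I)\sigma_i(I,a) > 0$ and $R_i^{t-1}(I,a) < 0$; hence $R_i^{t-1,+}(I,a) = 0$, and by the regret-matching rule \eqref{eq:CFRRM} either the denominator $\sum_b R_i^{t-1,+}(I,b)$ is zero or $\sigma_i^t(I,a) = 0$. In the latter case a pure strategy $s_i \in \text{supp}(\sigma_i)$ with $s_i(I) = a$ is excluded from $\text{supp}(\sigma_i^t)$, so $\text{supp}(\sigma_i) \nsubseteq \text{supp}(\sigma_i^t)$. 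Thus every iteration beyond $T_0$ counted by $y^T(\sigma_i)$ forces a zero denominator at some player-$i$ information set at the previous iteration, i.e. is charged to $x^T$, giving $y^T(\sigma_i) \leq T_0 + x^T$ and therefore $y^T(\sigma_i)/T \rightarrow 0$ whenever $x^T/T \rightarrow 0$.

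The main obstacle is the iterative step of part (i): rigorously controlling how much probability mass the current profiles place on the already-deleted dominated opponent strategies, and in particular handling the iterations where the regret-matching denominator vanishes (the arbitrary case), so that they cannot accumulate enough misplaced weight to destroy the positive gap. Converting the inductive support statements into a summable bound on misallocated opponent probability — and thereby making the degradation of $\epsilon$ precise — is the delicate part; everything else is bookkeeping built on top of the decomposition identity.
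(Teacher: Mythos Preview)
Your proposal is correct and follows essentially the same approach as the paper: the decomposition identity you call $D^T(\sigma_i) = \sum_I \pi_i^\sigma(I) \sum_a \sigma_i(I,a) R_i^T(I,a)$ is exactly the paper's Lemma~\ref{lem:fullcfr} summed over initial information sets, the induction on the number of deleted pure opponent strategies with the $-\epsilon T + o(T)$ bound matches the paper's argument via equation~\eqref{eq:domstrats}, and your derivation of part~(ii) from part~(i) via the regret-matching rule is identical. The ``main obstacle'' you flag---controlling the opponent's mass on already-deleted strategies during the iterative step---is exactly what the paper handles by invoking $y^T(s_{j_\ell}^\ell)/T \to 0$ from the inductive hypothesis (part~(ii) applied to the earlier deletions), so your concern is well-placed and resolved the same way.
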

\begin{proof-sketch} 
For $\sigma_i \in \Sigma_i$, define 
\[ R_{i,\text{full}}^T(\sigma_i) = \sum_{t=1}^T ( u_i(\sigma_i, \sigma_{-i}^t) - u_i(\sigma^t) ). \]
Similar to the proof of Theorems \ref{thm:normaldom} and \ref{thm:domacts}, there exists an $\epsilon > 0$ and a term $F'$ such that 
\begin{equation}
\label{eq:fulllessthanzero}
R_{i,\text{full}}^T(\sigma_i) \leq -\epsilon T + F' \text{ where } \lim_{T \rightarrow \infty} \frac{F'}{T} = 0.
\end{equation}
Next, one can show that 
\begin{equation}
\label{eq:full}
R_{i,\text{full}}^T(\sigma_i) = \sum_{I \in \cI_i} \pi_i^\sigma(I) \sum_{a \in A(I)} \sigma_i(I,a) R_i^T(I,a). 
\end{equation}
Since $\pi_i^\sigma(I), \sigma_i(I,a) \geq 0$, it follows by equations \eqref{eq:fulllessthanzero} and \eqref{eq:full} that after a finite number of iterations $T_0$, there exist $I \in \cI_i$, $a \in A(I)$ such that $\pi_i^\sigma(I) \sigma_i(I,a) > 0$ and $R_i^T(I,a) < 0$, establishing part (i). 
Part (ii) then follows as in the proof of part (ii) of Theorem \ref{thm:domacts}. \qed
\end{proof-sketch}

\begin{figure}%
\centering
\includegraphics[width=0.6\columnwidth]{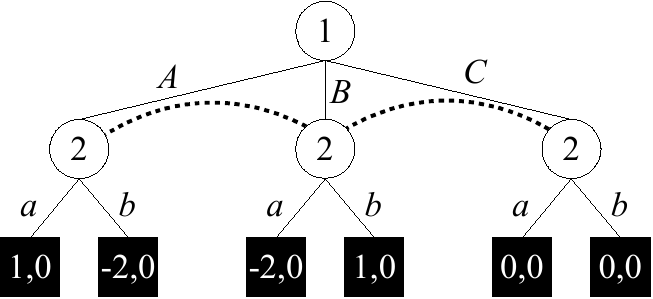}%
\caption{A two-player non-zero-sum extensive-form game where each player has a single information set.}%
\label{fig:domavg}%
\end{figure}

Similar to part (i) of Theorem \ref{thm:domacts}, part (i) of Theorem \ref{thm:domstrats} says that after a finite number of iterations, there is always some information set $I$ that the dominated strategy $\sigma_i$ plays to reach and some action at $I$ played by $\sigma_i$ which $\sigma_i^T$ does not play (except possibly when $\sum_{a \in A(I)} R_i^{T,+}(I,a) = 0$), and so $\sigma_i^T \neq \sigma_i$.  
Part (ii) similarly states that the number of profiles generated whose support contains $\sigma_i$, $y^T(\sigma_i)$, is sublinear in $T$. 
Notice that Theorems \ref{thm:normaldom} and \ref{thm:domstrats} do not draw any conclusions upon the average profile $\bar{\sigma}^T$. 
Perhaps surprisingly, it is possible to have a sequence of profiles with no regret where the average profile converges to a strictly dominated strategy. 
Consider the two-player non-zero-sum game in Figure \ref{fig:domavg}. 
The sequence of pure strategy profiles $(A,a), (B,b), (A,a), (B,b), ...$ has no positive regret for either player, and in the limit, the average profile for player 1, $\bar{\sigma}_1^T$, plays $A$ and $B$ each with probability $0.5$. 
However, $\bar{\sigma}_1^T$ is strictly dominated by the pure strategy that always plays $C$. 

Our final theoretical contribution shows that in two-player non-zero-sum games, regret minimization yields a bound on the average strategy profile's distance from being a Nash equilibrium. 
\begin{theorem}
\label{thm:genfolk}
Let $\epsilon, \delta \geq 0$ and let $\sigma^1, \sigma^2, ..., \sigma^T$ be strategy profiles in a two-player game. 
If $R_i^T / T \leq \epsilon$ for $i=1,2$, and $|u_1 + u_2| \leq \delta$, then $\bar{\sigma}^T$ is a $2(\epsilon + \delta)$-Nash equilibrium. 
\end{theorem}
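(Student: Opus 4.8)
The plan is to mirror the proof of the zero-sum folk theorem (Theorem \ref{thm:folk}), replacing the identity $u_1 + u_2 = 0$ by the two-sided bound $|u_1 + u_2| \le \delta$ and carefully tracking the resulting slack. Throughout I would write $\bar{u}_i = \frac{1}{T}\sum_{t=1}^T u_i(\sigma^t)$ for player $i$'s average realized payoff along the sequence, and let $V_i = \max_{\sigma_i' \in \Sigma_i} u_i(\sigma_i', \bar{\sigma}_{-i}^T)$ denote player $i$'s best-response value against the opponent's average strategy. The goal is then to establish $V_i \le u_i(\bar{\sigma}^T) + 2(\epsilon + \delta)$ for $i = 1, 2$, which is exactly the defining condition of a $2(\epsilon+\delta)$-Nash equilibrium at $\bar{\sigma}^T$.

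First I would convert the regret hypothesis into a best-response bound. Because the game has two players, $\sigma_{-i}$ is a single opponent's strategy and $u_i(\sigma_i', \cdot)$ is linear in it, so $\frac{1}{T}\sum_{t=1}^T u_i(\sigma_i', \sigma_{-i}^t) = u_i(\sigma_i', \bar{\sigma}_{-i}^T)$. Dividing the definition of $R_i^T$ by $T$ and invoking the hypothesis $R_i^T / T \le \epsilon$ then yields $V_i \le \bar{u}_i + \epsilon$ for $i = 1, 2$.

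Next I would exploit the near-zero-sum condition in two distinct places. Applied pointwise along the sequence, $|u_1(\sigma^t) + u_2(\sigma^t)| \le \delta$ averages to $|\bar{u}_1 + \bar{u}_2| \le \delta$, and in particular $-\bar{u}_2 \ge \bar{u}_1 - \delta$. Applied at the single profile $\bar{\sigma}^T$, the condition gives $u_1(\bar{\sigma}^T) + u_2(\bar{\sigma}^T) \ge -\delta$. Chaining these with the trivial inequality $u_2(\bar{\sigma}^T) \le V_2 \le \bar{u}_2 + \epsilon$ produces a lower bound on player 1's payoff under the average profile, namely $u_1(\bar{\sigma}^T) \ge -\delta - u_2(\bar{\sigma}^T) \ge -\delta - \bar{u}_2 - \epsilon \ge \bar{u}_1 - 2\delta - \epsilon$.

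Finally I would assemble the exploitability bound: subtracting this lower bound from $V_1 \le \bar{u}_1 + \epsilon$ gives $V_1 - u_1(\bar{\sigma}^T) \le 2\epsilon + 2\delta = 2(\epsilon + \delta)$, and the symmetric argument handles player 2. Since everything after the best-response bound is purely algebraic, I do not anticipate a genuine obstacle; the only point requiring care is bookkeeping of the constant. The condition $|u_1 + u_2| \le \delta$ is used twice—once on the realized average payoffs and once on the average profile—and each use contributes a $\delta$, which is precisely why the exploitability degrades to $2(\epsilon + \delta)$ rather than the $2\epsilon$ of the exact zero-sum case, so one must resist the temptation to double-count or drop one of these two applications.
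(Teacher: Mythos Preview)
Your proof is correct and follows essentially the same approach as the paper's: both start from the regret bound $V_i \le \bar{u}_i + \epsilon$ (via linearity in the opponent's strategy) and apply the near-zero-sum hypothesis $|u_1+u_2|\le\delta$ exactly twice to pick up the $2\delta$ slack. The only cosmetic difference is in the chaining: the paper sums the two regret inequalities, converts $\max_{\sigma_2'} u_2(\bar{\sigma}_1^T,\sigma_2')$ to $-\min_{\sigma_2'} u_1(\bar{\sigma}_1^T,\sigma_2') - \delta$, and then specializes $\sigma_2' = \bar{\sigma}_2^T$, whereas you keep the two regret bounds separate and apply the $\delta$-bound directly at the profile $\bar{\sigma}^T$ via $u_2(\bar{\sigma}^T) \le V_2$; the arithmetic is equivalent.
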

%\emph{Proof sketch.} By summing the two inequalities $R_i^T / T \leq \epsilon$ for $i = 1,2$ and pushing the average into the expected utility, we get 
%\[ \max_{\sigma_1' \in \Sigma_1} u_1(\sigma_1', \bar{\sigma}_2) + \max_{\sigma_2' \in \Sigma_2} u_2(\bar{\sigma}_1, \sigma_2') \leq 2\epsilon + \delta. \]
%The result for player 1 follows from the fact that $\max_{\sigma_2' \in \Sigma_2} u_2(\bar{\sigma}_1, \sigma_2') \geq -u_1(\bar{\sigma}^T) + \delta$, and the result for player 2 holds similarly. \qed
\begin{proof}
We generalize the proof of \cite[p.~11]{Waugh2009}. 
For $i = 1,2$, by the definition of regret, we have
\begin{align*}
\epsilon &\geq \frac{1}{T} \max_{\sigma_i' \in \Sigma_i} \sum_{t=1}^T \left( u_i(\sigma_i', \sigma_{-i}^t) - u_i(\sigma^t) \right) \\
&= \max_{\sigma_i' \in \Sigma_i} u_i(\sigma_i', \bar{\sigma}_{-i}^T) - \frac{1}{T} \sum_{t=1}^T u_i(\sigma^t)
\end{align*}
by linearity of expectation. 
Summing the two inequalities for $i=1,2$ gives
\begin{align*}
2\epsilon &\geq \max_{\sigma_1' \in \Sigma_1} u_1(\sigma_1', \bar{\sigma}_2^T) + \max_{\sigma_2' \in \Sigma_2} u_2(\bar{\sigma}_1^T, \sigma_2') - \frac{1}{T} \sum_{t=1}^T \left( u_1(\sigma^t) + u_2(\sigma^t) \right) \\
&\geq \max_{\sigma_1' \in \Sigma_1} u_1(\sigma_1', \bar{\sigma}_2^T) + \max_{\sigma_2' \in \Sigma_2} \left( -u_1(\bar{\sigma}_1^T, \sigma_2') - \delta \right) - \delta \\
&= \max_{\sigma_1' \in \Sigma_1} u_1(\sigma_1', \bar{\sigma}_2^T) - \min_{\sigma_2' \in \Sigma_2} u_1(\bar{\sigma}_1^T, \sigma_2') - 2\delta \\
&\geq \max_{\sigma_1' \in \Sigma_1} u_1(\sigma_1', \bar{\sigma}_2^T) - u_1(\bar{\sigma}^T) - 2\delta,
\end{align*}
where the last line follows by setting $\sigma_2' = \bar{\sigma}_2^T$. 
Rearranging terms gives 
\[ \max_{\sigma_1' \in \Sigma_1} u_1(\sigma_1', \bar{\sigma}_2^T) \leq u_1(\bar{\sigma}^T) + 2(\epsilon + \delta). \]
Applying the same arguments but reversing the roles of the two players gives 
\[ \max_{\sigma_2' \in \Sigma_2} u_2(\bar{\sigma}_1^T, \sigma_2') \leq u_2(\bar{\sigma}^T) + 2(\epsilon + \delta), \]
and thus by definition $\bar{\sigma}^T$ is a $2(\epsilon + \delta)$-Nash equilibrium. 
\end{proof}

Theorem \ref{thm:genfolk} is a generalization of Theorem \ref{thm:folk}. 
When $\delta = 0$, the game is zero-sum, and so the average profile converges to equilibrium as $\epsilon \rightarrow 0$. 
In addition, when the players' utilities sum to at most $\delta > 0$, then as $\epsilon \rightarrow 0$, the average profile converges to a $2\delta$-Nash equilibrium.

\subsection{Remarks} 

Theorems \ref{thm:normaldom}, \ref{thm:domacts}, and \ref{thm:domstrats} provide evidence that regret minimization removes iterative strict domination. 
Of course, eliminating strict domination may not provide any useful insights in games where few strategies are iteratively strictly dominated. 
Despite this obvious limitation, Theorems \ref{thm:domacts} and \ref{thm:domstrats} provide a better understanding of the strategies generated by CFR in non-zero-sum games than what coarse correlated equilibria provide. 
In the next section, we show that avoiding iteratively dominated actions is enough to perform well in Kuhn Poker. 
However, large games such as three-player Texas Hold'em are too complex to analyze action and strategic dominance beyond obvious errors, such as folding the best hand. 
It remains open as to how well our theory explains the success of CFR in these large games. 

% Describe the simple change, emphasizing memory improvements and something about speed of removal of dominated strategies
Perhaps more importantly, the theory developed here has guided us to a more efficient adaptation of CFR, in both time and memory, for games with more than two players. 
Given Theorems \ref{thm:domacts} and \ref{thm:domstrats} and given we have only finite time, we suggest using the current profile in practice rather than the average. 
In fact, while Theorem \ref{thm:genfolk} says that the average profile converges to a $2\delta$-Nash equilibrium in two-player games, there is no clear case for preferring the average over the current profile in three-or-more-player games. 
Furthermore, the average profile is not used in any computations during CFR, so when discarding the average, there is no reason to store the cumulative profile. 
This reduces the memory requirements of CFR by a factor of two, since then only one value per information set, action pair ($R_i^T(I,a)$) must be stored as opposed to two. 
Not only does this allow us to tackle larger games, the extra memory might be utilized to compute even stronger strategies than previously possible. 

We are not the first to consider using the current profile. 
In CFR-BR, a recently developed CFR variant for zero-sum games that replaces one player with a worst-case opponent, the current profile converges to equilibrium with high probability \cite[Theorem 4]{cfr-br}. 
The authors discuss similar benefits when discarding the cumulative profile in CFR-BR and just using the current strategy profile. 
Nonetheless, we are the first to suggest using the current profile both with the original CFR algorithm and in games with more than two players. 
The next section explores these new insights. 

\section{Empirical Study}
\label{sec:experiments}

Using poker as a testbed, we design several experiments to test our theory developed in the previous section. 
While previous work has applied CFR across several domains \cite{lanctot-thesis}, poker games are of particular interest as they are widely popular and many computer agents from past ACPC events are available to test against. 
New games can also be easily created by adjusting the number of players, cards, and betting rounds that take place. 

\subsection{Poker Games}
\label{sec:poker-games}

% Describe poker games and abstraction
%The game uses a standard 52 card deck and consists of 4 betting
%rounds.  In the first round, the \defword{pre-flop}, each player is dealt two
%private cards.  For subsequent rounds -- in order, the \defword{flop}, \defword{turn}, and \defword{river} --
%public community cards are revealed (3 at the flop and 1 at each of the turn and
%river).
%During each round, players sequentially take one of three actions:
%\defword{fold} (forfeit the game), \defword{call} (match the previous bet), or \defword{raise} (increase the bet).  There is a maximum of 4 raises per round, each with a fixed size, where the size is doubled on the final two rounds.  If neither player folds, then the player with the highest ranked poker hand wins all of the bets.

We consider three different poker games for our experiments in this section. 
The first is Kuhn Poker, which was introduced in Section \ref{sec:background}. 

Our second game and our main game of interest is three-player limit Texas Hold'em. 
To begin the game, the player to the left of the dealer posts a small blind of five chips, the next player posts a big blind of ten chips, and each player is dealt two private cards from a standard 52-card deck. 
Texas Hold'em consists of four betting rounds with three, one, and one public card(s) being revealed before the second, third, and fourth rounds respectively. 
All bets and raises are fixed to ten chips in the first two rounds and twenty chips in the last two rounds; players may not go \emph{all-in} as in no-limit poker. 
There is also a maximum of four bets or raises allowed per round. 
At the end of the fourth round, the players that did not fold reveal their hand. 
The player with the highest ranked poker hand made up of any combination of their two private cards and five public cards wins all the chips played. 

With three players, limit Texas Hold'em contains approximately $5 \times 10^{17}$ information sets and CFR would require hundreds of petabytes of RAM to minimize regret in such a large game. 
Instead, a common approach is to use state-space abstraction to produce a similar game of a tractable size by merging information sets or restricting the action space~\cite{Gilpin&Sandholm2006,Abstraction2013}. 
%\begin{definition}[{\bf Waugh \etal~\cite{AbstractionPathologies}}]
%\label{def:abstraction}
%An \defword{abstraction for player $i$} is a pair $\alpha_i = \left\langle \alpha_i^{\mathcal{I}}, \alpha_i^A \right\rangle$, where
%\begin{itemize}
%\item $\alpha_i^{\mathcal{I}}$ is a partition of $H_i$ defining a \defword{state abstraction} or a set of \defword{abstract information sets} coarser than $\mathcal{I}_i$ (\ie, every $I \in \mathcal{I}_i$ is a subset of some set in $\alpha_i^{\mathcal{I}}$), and
%\item $\alpha_i^A$ is a function on histories where $\alpha_i^A(h) \subseteq A(h)$ and $\alpha_i^A(h) = \alpha_i^A(h')$ for all histories $h$ and $h'$ in the same abstract information set.  We will call this the \defword{action abstraction} or the \defword{abstract action set}.
%\end{itemize}
%An \defword{abstraction} $\alpha$ is a set of abstractions $\alpha_i$, one for each player.  Finally, for any abstraction $\alpha$, the \defword{abstract game}, $\Gamma^\alpha$, is the extensive-form game obtained from $\Gamma$ by replacing $\mathcal{I}_i$ with $\alpha_i^{\mathcal{I}}$ and $A(h)$ with $\alpha_i^A(h)$ when $P(h) = i$, for all $i \in N$.
%\end{definition}
For Texas Hold'em, we merge card deals into \emph{buckets} so that hands falling into the same bucket are indistinguishable. % to CFR and the resulting strategies. 
We can then control the size of the abstract game by increasing or decreasing the number of buckets used on each round. 
However, increasing abstraction size not only increases memory requirements, but also increases the number of iterations required to minimize average regret (see equation \eqref{eq:CFRBound}). 
There are just three actions (fold, check/call, and bet/raise) available in limit Hold'em, and thus we do not abstract on actions. 
Note that applying CFR to an abstraction of Texas Hold'em yields no guarantees about regret minimization or domination avoidance in the real game (but are guaranteed in the abstract game). 
Furthermore, we will use imperfect recall abstractions that forget the buckets from previous rounds and break our assumption of perfect recall stated in Section \ref{sec:background}. 
Despite these complications, abstraction and imperfect recall still appear to work well in practice \cite{ImperfectRecall2009,rgbr}. 
%Three-player limit Texas Hold'em is one of the games played in the ACPC \cite{ACPC}, which we discuss further in Section \ref{sec:acpc}. 

Thirdly, we also consider the game of \emph{2-1 Hold'em} \cite{pcs} that is identical to Texas Hold'em, except consists of only the first two betting rounds and only one raise is allowed per round.  
Two-player 2-1 Hold'em has roughly 16 million information sets, which is small enough to apply CFR without abstraction. 
Furthermore, because full tree traversals in CFR are very expensive, we instead use sampling variants that only traverse a smaller subset of information sets on each iteration. 
We found that the most efficient variant for 2-1 Hold'em was Public Chance Sampling \cite{pcs} and for three-player limit Texas Hold'em was External Sampling \cite{mccfr}. 

%the external sampling (ES) variant of CFR \cite{mccfr} that only traverses a sampled subset $\hat{\cI}_i^T$ on each iteration $T$. 
%Similarly, \defword{[2-4] hold'em} has just two rounds, but the full four raises are allowed per round, resulting in 94 million information sets in total.  In both [2-1] hold'em and [2-4] hold'em, the size of a raise doubles from the pre-flop to the flop.

\subsection{Dominated Actions and Performance in Kuhn Poker}
\label{sec:kuhn-experiments}

\begin{table}[t]%
\centering
\caption{Results of a six-agent mock tournament of Kuhn poker. Reported scores for the row strategy against the column strategy are in expected milli-chips per game, averaged over both player orderings. \label{tab:kuhn-tourney}}{
\begin{tabular}{r|ccccccc}
%\hline
 & \bf Uni & \bf ND & \bf NID & \bf NE-0 & \bf NE-0.5 & \bf NE-1 & \bf Overall \\
\hline
\bf Uni &  - & -270 & -187 & -111 & -138 & -166 & -174 \\
%\hline
\bf ND & 270 & - & -31 & -55 & -34 & -13 & 27 \\
%\hline
\bf NID & 187 & 31 & - & 0 & 0 & 0 & 43 \\
%\hline
\bf NE-0 & 111 & 55 & 0 & - & 0 & 0 & 33 \\
%\hline
\bf NE-0.5 & 138 & 34 & 0 & 0 & - & 0 & 34 \\
%\hline
\bf NE-1 & 166 & 13 & 0 & 0 & 0 & - & 36 \\
%\hline
\end{tabular}
}
\end{table}

To begin, we investigate the correlation between the presence of iteratively dominated actions in one's strategy with the performance of the strategy in a mock ACPC-style tournament. 
In the ACPC, each game is evaluated according to two different scoring metrics. 
The total bankroll (TBR) metric simply ranks competitors according to their overall earnings in money per game averaged across all possible opponents. 
The instant runoff (IRO) metric, however, ranks competitors by iteratively eliminating the lowest scoring agent from consideration and reevaluating the overall scores by averaging only across the remaining agents. 
In a zero-sum game, a Nash equilibrium strategy is optimal for winning IRO since it never loses in expectation to any opponent. 

We ran a six-agent mock tournament of Kuhn poker, which was introduced in Section \ref{sec:background}. 
Kuhn poker is a small enough game where we can easily identify all iteratively dominated actions and all Nash equilibrium strategies have already been classified \cite{Kuhn}. 
Our agents consist of a uniform random strategy (Uni), a strategy that plays no dominated actions (does not call with the Jack or fold with the King) but is otherwise uniform random (ND), a strategy that plays no iteratively dominated actions (no dominated actions and does not bet with the Queen) but is otherwise uniform random (NID), and three Nash equilibrium strategies (NE-$\gamma$) for $\gamma = 0, 0.5, 1$, where $\gamma$ is the probability of betting with the King. 
A cross table of the results for each pair of strategies is given in Table \ref{tab:kuhn-tourney}. 
For IRO, after successively eliminating Uni and then ND, there is a four-way tie for first place between the three equilibrium strategies and NID. 
In addition, NID happens to win TBR, though none of the strategies are designed with TBR in mind. 
This mock tournament provides one example where high performance can be achieved by simply avoiding iteratively dominated errors. 

\subsection{Distance from Equilibrium in Two-Player Non-Zero-Sum Games}

\begin{figure}[t]%
\centering
\subfloat[Orange tilt] {
	\includegraphics[width=0.48\columnwidth]{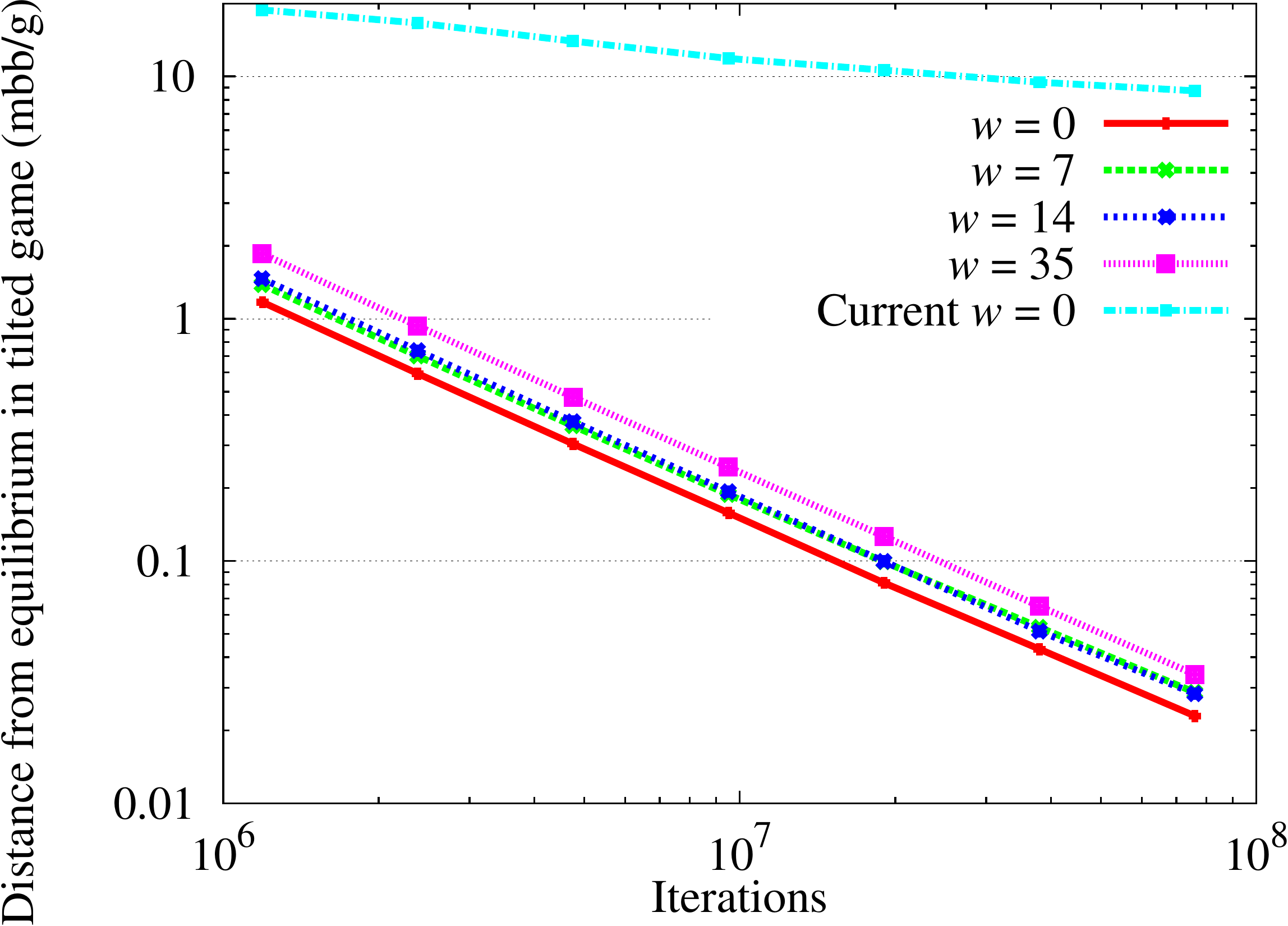}%
	\label{fig:orange}
} %\\
\subfloat[Green tilt] {
	\includegraphics[width=0.48\columnwidth]{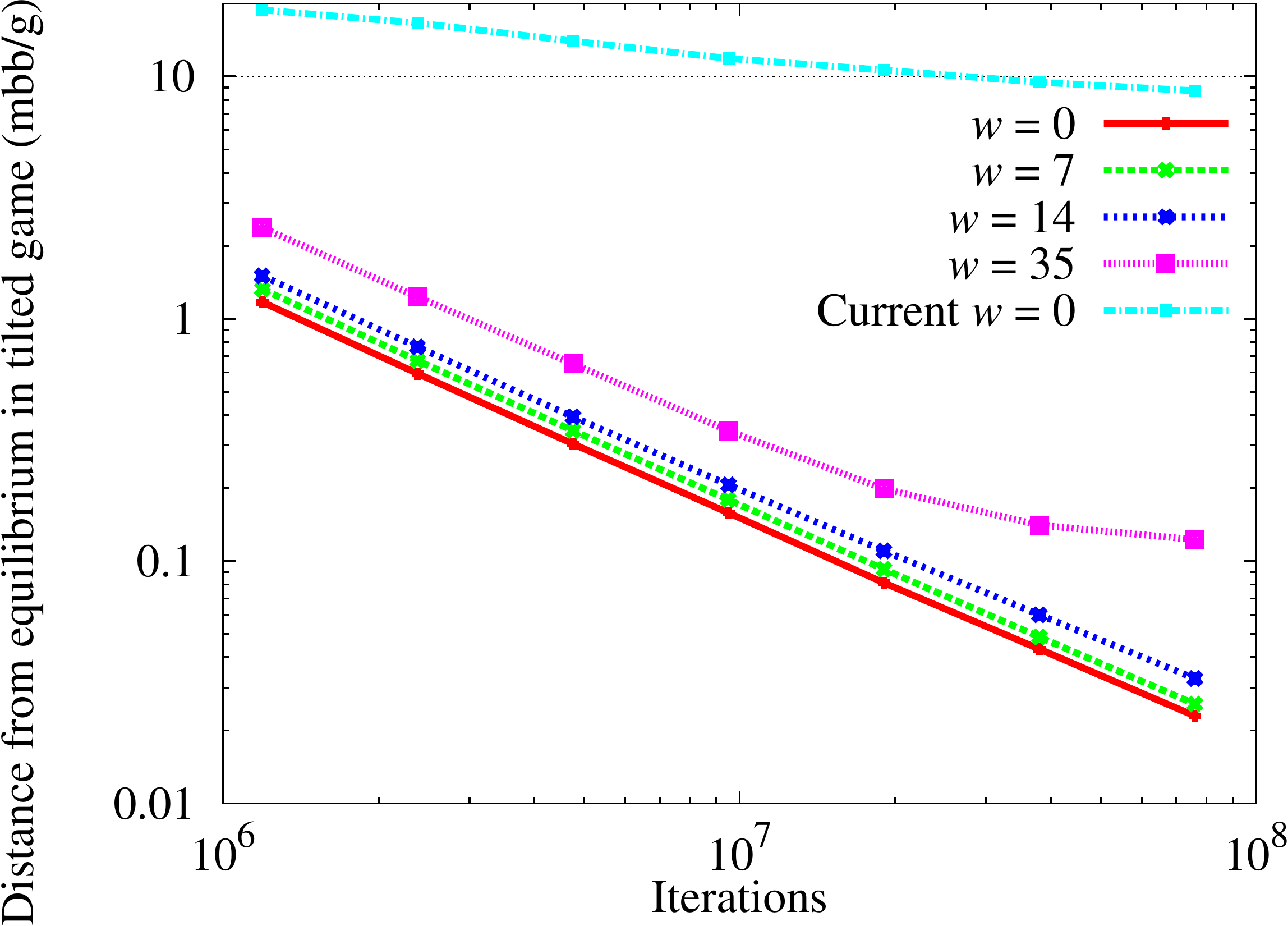}%
	\label{fig:green}
}
\caption{Log-log plots measuring the distance from equilibrium of CFR strategies in \textbf{\boldmath{$w$}}\%-tilted 2-1 Hold'em over iterations.  Distance is measured in milli-big-blinds per game (mbb/g).}%
\label{fig:tilts}%
\end{figure}

Our next experiment applies CFR to non-zero-sum \emph{tilted} variants of two-player 2-1 Hold'em. 
Tilted games are constructed by rewarding or penalizing players depending on the outcome of the game. 
This can lead to more aggressive play when applied to the regular, non-tilted game and were used by the poker program Polaris that won the 2008 Man-vs-Machine competition \cite{rgbr}. 
Here, we use the \emph{orange} tilt that gives the winning player an extra $w$\% bonus, and the \emph{green} tilt that both reduces the losing player's loss in a showdown (\ie,  when neither player folded) by $w$\% and penalizes the winning player by $w$\% when the losing player folded. 
In both of these games, we can bound $|u_1 + u_2| \leq \Delta_i w / 100$, and so Theorem \ref{thm:genfolk} states that CFR will converge to at least a $\Delta_i w / 50$-Nash equilibrium. 
For $w \in \{0,7,14,35\}$, we ran CFR and measured how far the average profile was from equilibrium in the $w$\%-tilted game by calculating $\max_{\sigma_i \in \Sigma_i} u_i(\sigma_i, \bar{\sigma}_{-i}^T)$ and averaging over both players $i=1,2$. 
In addition, we also measured the same value for the current profile in the non-tilted game ($w = 0$). 
These results are shown in Figure \ref{fig:tilts}. 
As expected, in the non-tilted game ($w = 0$), the average profile is approaching a Nash equilibrium. 
For the tilted games, we see that as $w$ is increased, most of the profiles are further from equilibrium, coinciding with Theorem \ref{thm:genfolk}. 
However, the strategies are much closer to equilibrium than the distance guaranteed by Theorem \ref{thm:genfolk} (note that $\Delta_i = 8$ big blinds) and only in the green tilt with $w=35$ is it obvious that CFR is not converging to an exact equilibrium. 
Of course, Theorem \ref{thm:genfolk} only provides an upper bound on the average profile's distance from equilibrium, and this bound appears to be quite loose. 
These results warrant further investigation into regret minimization in two-player non-zero-sum games. 
Finally, it is clear that the current strategy profile with $w=0$ is not converging to equilibrium. 
Thus, unlike CFR-BR \cite{cfr-br}, the average profile from CFR is generally preferred to the current profile in two-player games as it gives a better worst-case guarantee. 

\subsection{Positive Regret and Current Profile in Three-Player Limit Hold'em}
\label{sec:holdem-experiments}

\begin{figure}[t]%
\centering
\subfloat[] {
	\includegraphics[width=0.49\columnwidth]{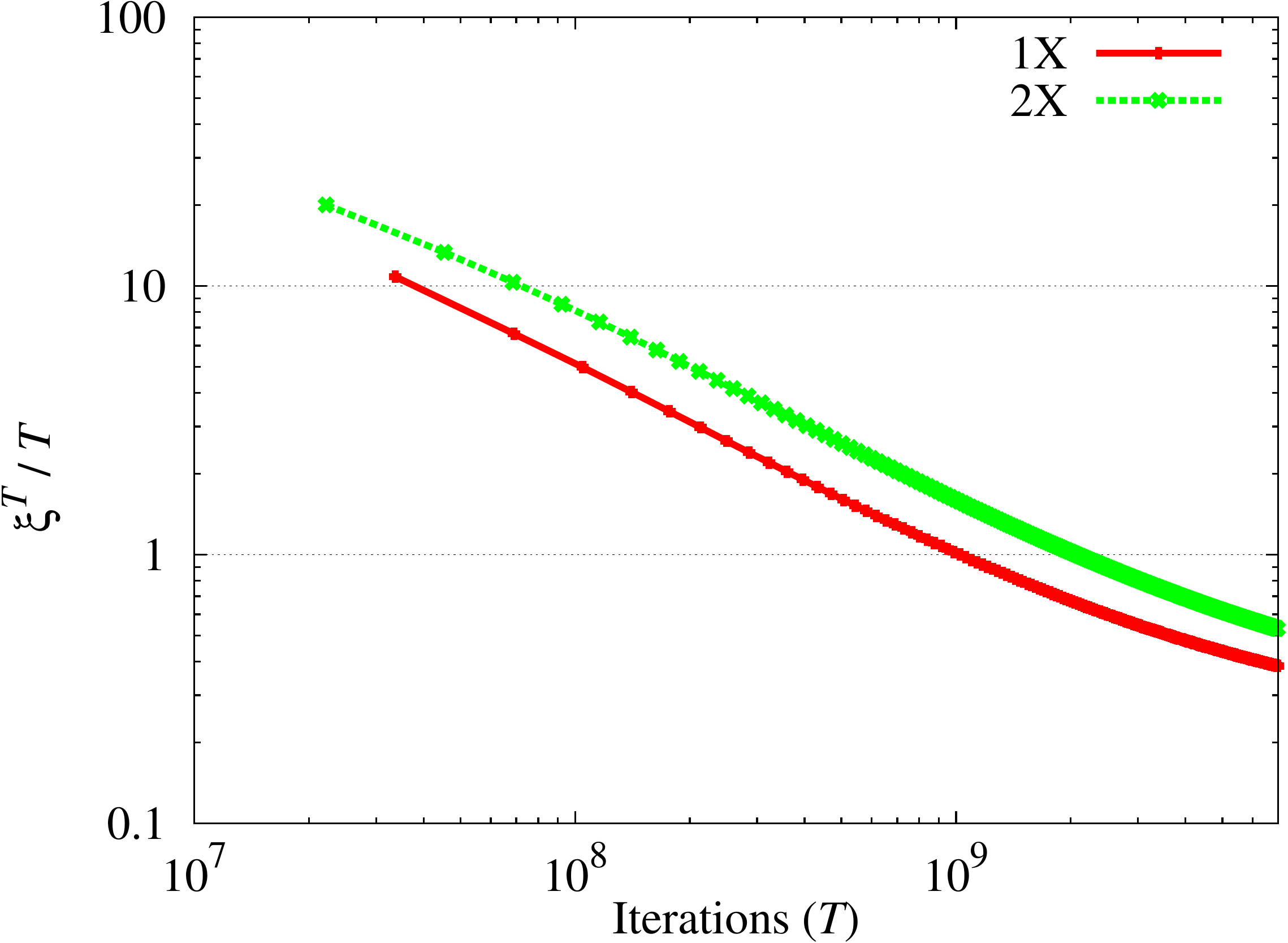}%
	\label{fig:posreg}%
}
\subfloat[] {
	\includegraphics[width=0.49\columnwidth]{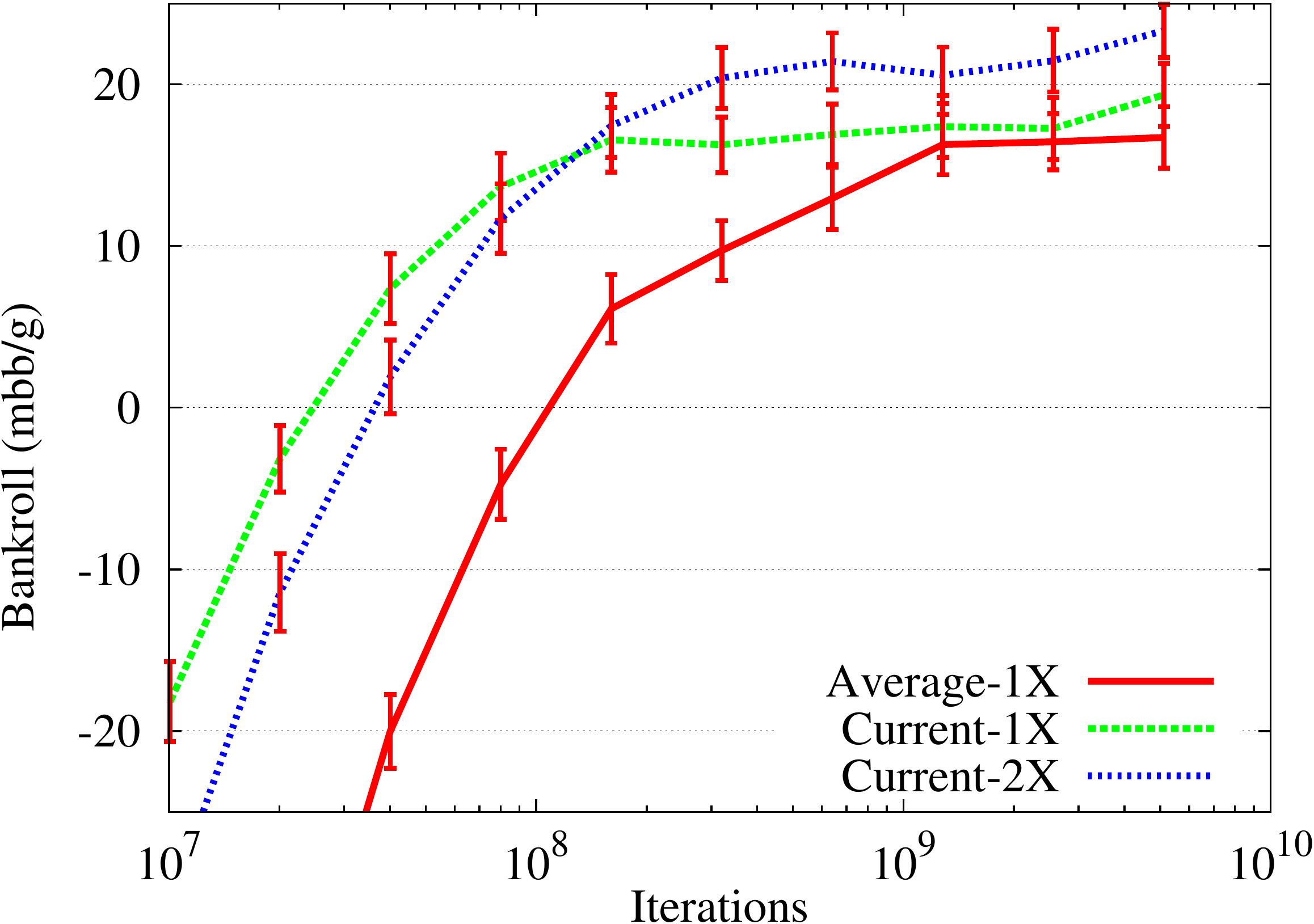}%
	\label{fig:2Xabs}%
}
\caption{\textbf{(a)} Log-log plot measuring the frequency at which an information set is visited where every action has nonpositive cumulative counterfactual regret during CFR in the 1X and 2X abstractions of three-player limit Texas Hold'em.  \textbf{(b)} Performance over iterations (log scale) of three strategy profiles in a four-agent round-robin competition of three-player limit Texas Hold'em, measured in milli-big-blinds per game. \emph{Current-2X} is the current profile generated by CFR in the 2X abstraction that is twice as large as the 1X abstraction used to generate \emph{Average-1X} and \emph{Current-1X}.  Error bars indicate 95\% confidence intervals over 50 competitions.}%
\end{figure}

% Negative regret test results?
Next, we examine how often $\sum_{a \in A(I)} R_i^{T,+}(I,a) = 0$ as required by parts of Theorems \ref{thm:domacts} and \ref{thm:domstrats}. % for some information set $I \in \cI_i$ and player $i \in N$. 
CFR was applied to two different abstractions of three-player limit Texas Hold'em. 
The first, labeled \emph{1X}, consists of 169, 900, 100, and 25 buckets per betting round respectively. 
This abstraction size was used by the winning agents of the 2010 ACPC 3-player events \cite{stitching} and contains about 262 million information sets. 
The second abstraction, labeled \emph{2X}, uses 169, 1800, 200, and 50 buckets per betting round respectively, resulting in an abstract game approximately twice the size. 
All of our abstractions were built off-line using a $k$-means clustering algorithm on hand strength distribution described by Johanson \etal~\cite{Abstraction2013}. % and we omit further details of this process. 
For each abstraction, we measured $\xi^T$, 
%\[ \xi^T = \left| \left\{ (I, t) \mid \sum_{a \in A(I)} R_i^{t,+}(I,a) = 0, I \in \hat{\cI}_i^t, 1 \leq t \leq T \right\} \right|, \] 
the total number of times where External Sampling traversed an information set that had no positive regret at any action. 
The average of $\xi^T$ is plotted over iterations $T$ in Figure \ref{fig:posreg}. 
In both cases, we see that encountering an information set with no positive regret becomes less frequent over time, where we eventually encounter fewer than one such information set per iteration on average. 
While we cannot guarantee that $x^T / T \approx \xi^T / T \rightarrow 0$ as required by Theorems \ref{thm:domacts} and \ref{thm:domstrats}, we at least have evidence that having no positive regret becomes a rare event. 
By part (i) of Theorems \ref{thm:domacts} and \ref{thm:domstrats}, this means that iteratively strictly dominated actions and strategies will likely be avoided in the current strategy profile. 

% 2x abstraction size results?
Using these same abstractions of three-player Hold'em, we now show that the current profile can reach higher performance faster than the average profile, and that the extra savings in memory acquired by discarding the average profile can be utilized to generate even stronger strategies. 
In this experiment, we generated three different strategy profiles with CFR, saving the profiles at various iteration counts.  
For the 1X abstraction, we kept both the average and the current profile, while for the 2X abstraction, we kept just the current profile. 
Note that running CFR on the 2X abstract game without keeping the average profile requires no more RAM than running CFR on the 1X abstraction and keeping both profiles. 
For each of our saved profiles, we then played a four-agent round-robin competition (RRC) against the base strategy profiles\footnote{The 2010 and 2011 agents employed special experts in some two-player scenarios that were not used in this specific experiment. More details regarding these agents are provided later in Section \ref{sec:acpc}.} from the top 2009, 2010, and 2011 ACPC three-player entries. 
Figure \ref{fig:2Xabs} shows the amount won by each of our three strategies over iterations, averaged over 50 RRCs consisting of 10,000 games per match. 
Clearly, the 1X current profile reaches strong play much sooner than the average profile, which requires about ten times the number of iterations to reach the same level of performance. 
Furthermore, while more iterations are needed in the 2X abstraction as expected by equation \eqref{eq:CFRBound}, we see that 2X eventually yields a current profile that outperforms both profiles in the 1X abstraction. 
%Unfortunately, despite a recent improvement \cite{rgbr}, three-player Hold'em remains too large of a game to measure distance from equilibrium. 

\section{A New Champion Three-Player Limit Texas Hold'em Agent}
\label{sec:acpc}

Finally, this section presents a new three-player limit Texas Hold'em agent that won the three-player events of the 2012 ACPC. 
%Our new agent uses a recently proposed abstraction methodology that leads to stronger performance than previous approaches. 
Before presenting this new agent in detail, we summarize the previous competition winners. 

\subsection{Previous ACPC Winners}
\label{sec:old-acpc}

As we discussed in Section \ref{sec:poker-games}, abstraction is necessary in order to feasibly apply CFR to Texas Hold'em. 
Despite the loss of theoretical guarantees and the existence of abstraction pathologies \cite{AbstractionPathologies}, we generally see increased performance as we increase the granularity of our abstractions; in other words, more buckets are typically better \cite{rgbr,Abstraction2013}. 
Abstraction granularity, however, is restricted by computational resources as CFR requires space linear in the size of the abstract game. 

One approach to improving abstraction granularity is to partition a game into smaller pieces and run CFR on each piece, either independently \cite{AAMAS2010,StrategyGrafting,stitching} or concurrently \cite{stitching}. 
Strategies for each piece are referred to as \emph{experts} that during a match, only act when play reaches their piece of the game. 
The winner of the 2011 ACPC three-player instant runoff (IRO) event was an agent built with such expert strategies. 
Similar to Abou Risk and Szafron's \emph{heads-up experts}, the 2011 experts only acted in what appear to be the four most common two-player scenarios that resulted after one player had folded \cite[Table 4]{AAMAS2010}. 
In particular, an expert only acted after the opening sequence of player actions was $f$, $rf$, $rrf$, or $rcf$, where $f$ denotes the fold action, $c$ denotes call, and $r$ denotes raise. 
Two-player scenarios are convenient to work with since the elimination of one player greatly reduces the number of possible future action sequences and thus reduces the size of the game. 
These experts were computed independently using an abstraction with 169, 180,000, 540,000, and 78,480 buckets on each of the four betting rounds respectively. 
Here and throughout this section, the same $k$-means clustering technique on hand strength distribution from Section \ref{sec:holdem-experiments} was used to bucket hands and we refer the reader to Johanson \etal~\cite{Abstraction2013} for more details. 
To play the rest of the game, a \emph{base strategy} for the full, unpartitioned three-player game was computed with CFR using an abstraction with 169, 10,000, 5450, and 500 buckets per round respectively. 
Thus, the experts could distinguish between many more different hands compared to the base strategy, even though the abstract game for the base strategy still contained approximately 5.9 billion information sets. 
More details about this expert construction process are found in the description of the 2010 ACPC three-player IRO winner \cite{stitching} that was identical to the 2011 agent but used coarser abstractions. 
In 2009, the first year of the three-player events, the IRO winner was a simple base strategy computed with CFR in a very coarse abstraction \cite{AAMAS2010}.

\subsection{A New Three-Player Limit Texas Hold'em Agent}

As demonstrated above, partitioning a game into smaller pieces is a convenient method for increasing abstraction granularity. 
For the 2012 ACPC, we again used this same methodology to construct our new three-player limit Texas Hold'em agent. 
This time, rather than partitioning the game into special two-player scenarios, we partitioned the histories into two parts: an \emph{important} part and an \emph{unimportant} part. 
The important histories were defined as follows. 
First, we scanned all of the 2011 ACPC match logs that the winning IRO agent presented above played in and for each betting sequence, we calculated the frequency at which the agent was faced with a decision at that sequence. 
For example, the frequency the agent was faced with a decision at the empty betting sequence was $1/3$ since positions in the game rotate, making the agent first to act once in every three hands. 
Next, we multiplied each of these frequencies by the pot size at that betting sequence.  
For instance, we multiplied the $1/3$ frequency for the empty betting sequence by $15$ since the game is played with a small blind of $5$ chips and a big blind of $10$ chips, creating an initial pot of $15$ chips. 
For each history, if this value for the history's betting sequence was greater than $1/100$, then the history was labeled as important. 
Since $(1/3)\cdot 15 = 5 > 1/100$, the empty sequence was labeled as important. 
In addition, any prefix of an important history was also labeled as important, while the remaining histories were labeled as unimportant. 
%Since betting actions are public information, this forms a valid grafting partition as defined by Definition \ref{def:background:graft-part} for each player. 
Only $0.023$\% of the nonterminal betting sequences in three-player limit Hold'em belonged to the important part. 
While many of the important histories overlapped with the two-player scenarios used by the 2011 agent, there were several three-player scenarios, such as the empty sequence and the $rcc$ sequence, that were labeled important. 

Using this partition, we employed a very fine-grained abstraction on the important part and a coarse abstraction on the unimportant part. 
This way, our agent can distinguish between many more hands at the few sequences that historically were reached more frequently or that had lots of chips at stake. 
Our coarse abstraction for the unimportant part used the same 169, 1800, 200, and 50 buckets per round employed by our 2X abstraction in Section \ref{sec:holdem-experiments}, while our fine-grained abstraction for the important part used 169, 180,000, 765,000, and 840,000 buckets per round respectively. 
Strategies for both parts were computed concurrently \cite{stitching} across the 2.5 billion information set abstract game resulting from the two abstractions. 
Note that this abstract game is less than half the size of the abstract game used to compute the base strategy in 2011, meaning that less computer memory was required to run CFR. 
We used a parallel implementation of the External Sampling variant of CFR mentioned in Section \ref{sec:poker-games}, which ran for 16 days using 48 2.1 GHz AMD processors on a machine with 256GB of total RAM (though less than 100GB of RAM were needed).  

\subsection{Results}

\begin{table}[t]
\centering
\caption{Results of the 2012 ACPC three-player limit Hold'em events \cite{ACPC}. Earnings are in milli-big-blinds per game (mbb/g) and errors indicate 95\% confidence intervals.}
\begin{tabular}{r|c}
\multicolumn{2}{c}{\bf Total Bankroll} \\
\bf Agent & \bf Total Earnings \\
\hline
\cellcolor{yellow} Hyperborean3p & \cellcolor{yellow} $28 \pm 5$ \\
little.rock & $-4 \pm 7$ \\
neo.poker.lab & $-11 \pm 5$ \\
sartre & $-12 \pm 7$
\end{tabular}
\begin{tabular}{r|ccc}
\multicolumn{4}{c}{\bf Instant Runoff} \\
\bf Agent & \bf Round 1 & \bf Round 2 & \bf Round 3 \\
\hline
\cellcolor{yellow} Hyperborean3p & \cellcolor{yellow} $37 \pm 5$ & \cellcolor{yellow} $28 \pm 5$ & \cellcolor{yellow} $23 \pm 8$ \\
little.rock & $13 \pm 6$ & $-4 \pm 7$ & $-9 \pm 9$ \\
neo.poker.lab & $7 \pm 5$ & $-11 \pm 5$ & $-14 \pm 6$ \\
sartre & $5 \pm 7$ & $-12 \pm 7$ & Eliminated \\
dcubot & $-62 \pm 8$ & Eliminated & Eliminated
\end{tabular}
\label{tab:2012}
\end{table}

\begin{table}[t]
\centering
\caption{Results of a four-agent RRC between the ACPC IRO three-player winners from 2009, 2010, 2011, and 2012. Earnings are in milli-big-blinds per game for the row player against the column players and errors indicate 95\% confidence intervals.}
\begin{tabular}{r|ccccccc}
 & 09,10 & \small 09,11 & 09,12 & 10,11 & 10,12 & 11,12 & \bf Overall \\
\hline
\bf 2009 & - & - & - & $-21 \pm 7$ & $-26 \pm 5$ & $-31 \pm 5$ & \boldmath $-26 \pm 4$ \\
\bf 2010 & - & $0 \pm 4$ & $-5 \pm 3$ & - & - & $-23 \pm 5$ & \boldmath $-10 \pm 4$ \\
\bf 2011 & $21 \pm 6$ & - & $6 \pm 5$ & - & $8 \pm 5$ & - & \boldmath $11 \pm 4$ \\
\bf \cellcolor{yellow} 2012 & \cellcolor{yellow} $31 \pm 5$ & \cellcolor{yellow} $25 \pm 5$ & \cellcolor{yellow} - & \cellcolor{yellow} $16 \pm 4$ & \cellcolor{yellow} - & \cellcolor{yellow} - & \cellcolor{yellow} \boldmath $24 \pm 4$
\end{tabular}
\label{tab:acpc-summary}
\end{table}

The 2012 competition results \cite{ACPC} are presented in Table \ref{tab:2012}. 
Our 2012 agent, name \emph{Hyperborean3p}, won both the IRO and TBR events by significant margins. 
In addition, we compared our new agent against the previous IRO winners from the 2009, 2010, and 2011 competitions by running a four-agent round-robin competition (RRC). 
Table \ref{tab:acpc-summary} presents the results averaged across 10 RRCs. 
We see that not only does the 2012 agent require less computer memory to generate than the 2011 agent, the 2012 agent earns 13 milli-big-blinds per game more on average. 

Finally, all of the competition winners from 2009 to 2012 used the average strategy profiles generated by CFR. 
In light of our new insights from Section \ref{sec:theory} and as a final validation of our CFR modification, we reran CFR on the 2012 abstract game using the same CFR implementation on the same machine, except now saving the current profile and discarding the average. 
For several checkpoints of the original average strategy and the new current strategy, we played 10 RRCs versus the 2009, 2010, and 2011 ACPC IRO winners and plotted the results in Figure \ref{fig:rrc}. 
While the average strategy takes 20 days before earning 25 milli-big-blinds per game, the current strategy reaches better performance in just 5 days while requiring only half the memory (less than 50GB of RAM) to compute.

\section{Conclusion}
\label{sec:conclusion}

% Recap
This paper provides the first theoretical advancements for applying CFR to games that are not two-player zero-sum. 
While previous work had demonstrated that CFR does not necessarily converge to a Nash equilibrium in such games, we have provided theoretical evidence that CFR eliminates iteratively strictly dominated actions and strategies. 
Thus, CFR provides a mechanism for removing iterative strict domination that was otherwise infeasible with previous techniques for large, non-zero-sum extensive-form games. 
In addition, our theory is the first step to understanding why CFR generates well-performing strategies in non-zero-sum games. 
Though our experiments show that the current profile reaches a high level of performance faster than the average, it remains unclear whether this is due to faster removal of domination that our theory illustrates. 
Nonetheless, we have shown that just using the current profile gives a more time and memory efficient implementation of CFR for games with more than two players that can lead to increased performance. 
Furthermore, we presented a new three-player limit Texas Hold'em agent that won both three-player events of the 2012 Annual Computer Poker Competition. 
Our agent uses a new partition of the game tree, requires less computer memory to generate than the 2011 winner, and outperforms the previous competition winners by a significant margin. 

\begin{figure}[t]%
\centering
\includegraphics[width=0.49\columnwidth]{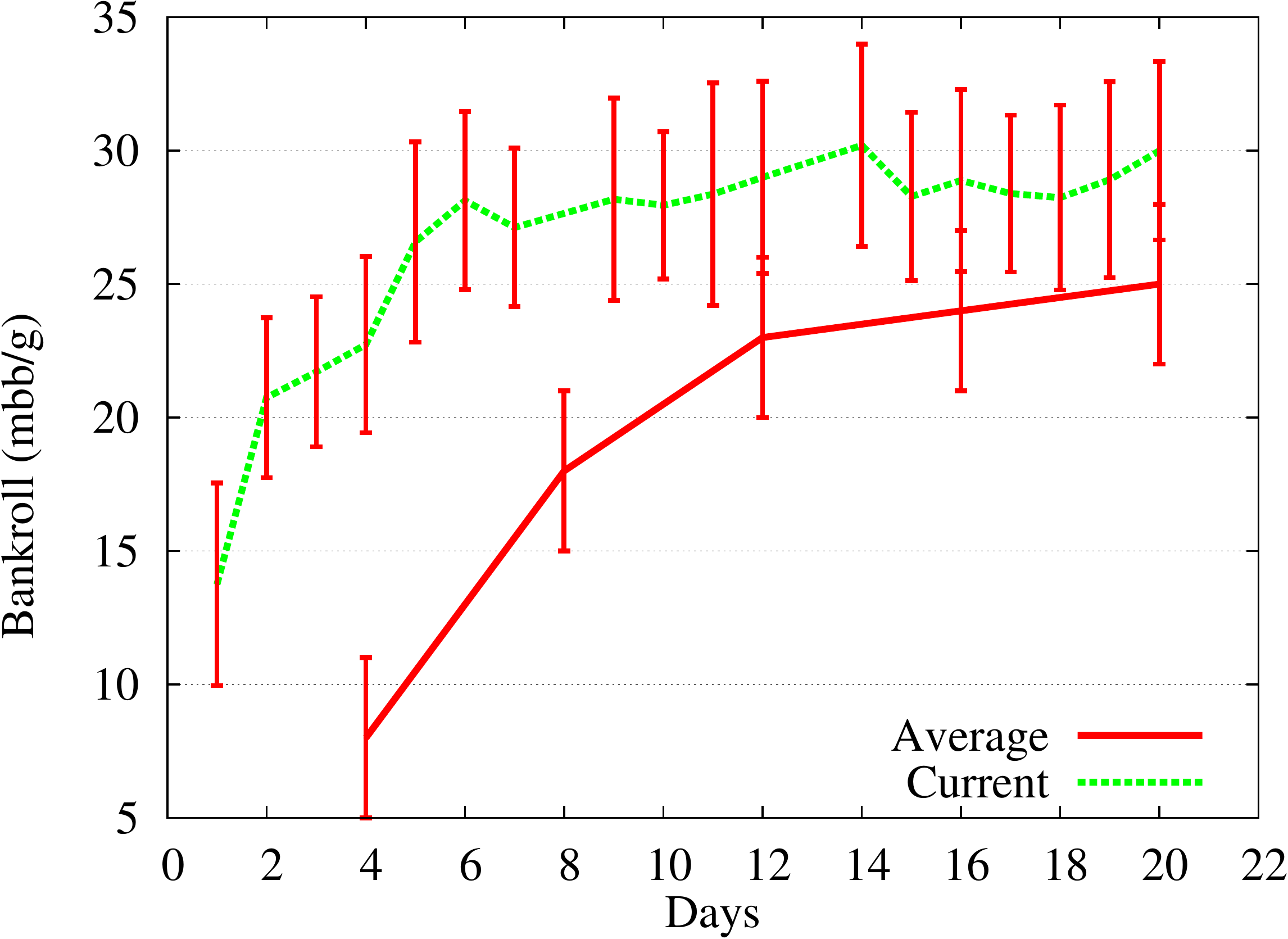}%
\caption{Performance over time (in days) of the average profile that won the three-player events of the 2012 ACPC, and of the current profile computed in the same abstract game. Error bars indicate 95\% confidence intervals over 10 competitions versus the top 2009, 2010, and 2011 ACPC IRO three-player agents.}%
\label{fig:rrc}%
\end{figure}

% This is only the first step into explaining CFR's good performance.  Future work: Likely more explanation required because random strategy that avoids domination.
% Future work: This only gives us a good strategy to use on the first go.  Need to employ opponent modelling in repeated games.
Future work will look at finding additional properties of CFR in non-zero-sum games that go beyond domination. 
Additionally, we would like to compare CFR's average and current profiles in other large, non-zero-sum domains outside of poker. 
Finally, this work has only considered the problem of computing strategies for play against a set of unknown opponents. 
In poker and other repeated games, we often gain information about the opponents' strategies over time. 
For repeated non-zero-sum games, using opponent modelling to adjust one's strategy could drastically improve play. 

\section*{Acknowledgements}

% Acknowledgements in the finished paper
% NSERC, Alberta Innovates, Westgrid, other?
Thank-you to Martin Zinkevich and the members of the Computer Poker Research Group at the University of Alberta for helpful conversations pertaining to this research.  
This research was supported by Alberta Innovates -- Technology Futures and computing resources were provided by WestGrid and Compute Canada.
%Compute/Calcul Canada.

\section*{Vitae}

\begin{figure}[h]%
\centering
\includegraphics[width=0.49\columnwidth]{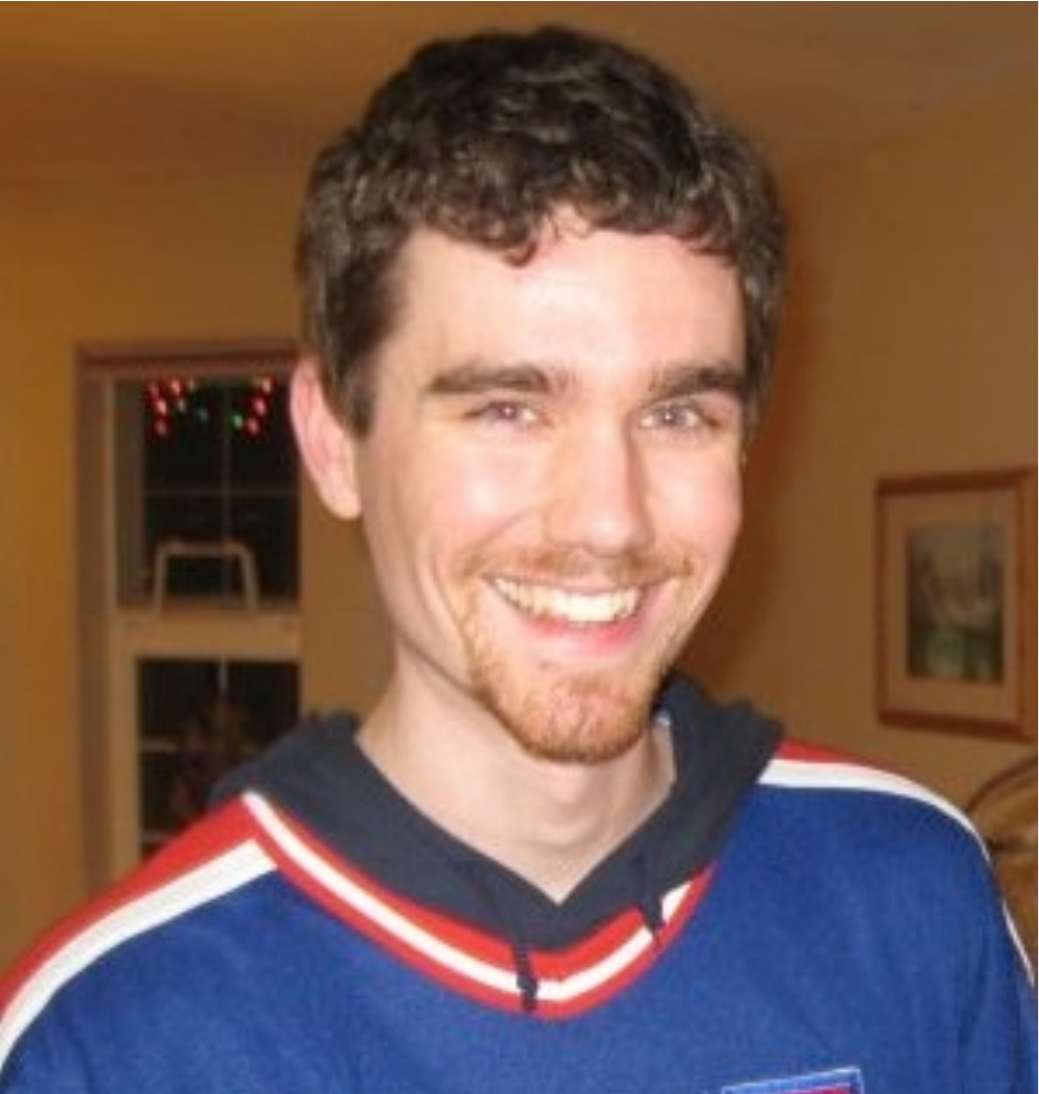}%
\label{fig:bio}%
\end{figure}

Richard Gibson is a Ph.D.~student in the Computing Science Department at the University of Alberta. 
He is a member of the Computer Poker Research Group and primary author of \emph{Hyperborean3p}, the reigning three-player limit Texas Hold'em champion of the Annual Computer Poker Competition. 
His research interests generally lie at the intersection of artificial intelligence and games. 

%% The Appendices part is started with the command \appendix;
%% appendix sections are then done as normal sections
\appendix

\section{Proofs of Technical Results}
\label{sec:appendix}

In this appendix, we prove Proposition \ref{prop:weakdom} and Theorems \ref{thm:normaldom}, \ref{thm:domacts}, and \ref{thm:domstrats}. 
For $I \in \cI_i$, define
\[ D(I) = \{I' \in \cI_i \mid \exists h \in I, h' \in I' \text{ such that } h \sqsubseteq h' \} \] 
to be the set of information sets descending from $I$. 

\begin{prop-weakdom}
If $a$ is a weakly dominated action at $I \in \cI_i$ and $\sigma_i \in \Sigma_i$ satisfies $\pi_i^\sigma(I)\sigma_i(I,a) > 0$, then $\sigma_i$ is a weakly dominated strategy. 
\end{prop-weakdom}
\begin{proof} 
Since $a$ is weakly dominated, there exists a strategy $\sigma_i' \in \Sigma_i$ such that $v_i(I, \sigma_{(I \rightarrow a)}) \leq v_i(I, (\sigma_i', \sigma_{-i}))$ for all opponent profiles $\sigma_{-i} \in \Sigma_{-i}$, and there exists an opponent profile $\sigma_{-i}'$ such that $v_i(I, (\sigma_{i(I \rightarrow a)}, \sigma_{-i}')) < v_i(I, (\sigma_i', \sigma_{-i}'))$. 
Let $\hat{\sigma}_i$ be the strategy $\sigma_i$ except at $I$, where $\hat{\sigma}_i(I,a) = 0$ and $\hat{\sigma}_i(I,b) = \sigma_i(I,b) / (1 - \sigma_i(I,a))$ for all $b \in A(I)$, $b \neq a$. 
Next, for all $J \in \cI_i$ and $b \in A(J)$, define 
\[ \sigma_i''(J,b) = 
\left\{ 
\begin{array}{ll} 
\multirow{2}{*}{$\frac{\pi_i^\sigma(I) \left( \sigma_i(I,a) \pi_i^{\sigma'}(I,J) \sigma_i'(J,b) + ( 1 - \sigma_i(I,a) ) \pi_i^{\hat{\sigma}}(I,J) \hat{\sigma}_i(J,b) \right) }{\pi_i^\sigma(I) \left( \sigma_i(I,a) \pi_i^{\sigma'}(I,J) + ( 1 - \sigma_i(I,a) ) \pi_i^{\hat{\sigma}}(I,J) \right) }$} & \text{if } J \in D(I) \\
& \text{(and arbitrary when the} \\
& \text{ denominator is zero),} \\
\sigma_i(J,b) & \text{if } J \notin D(I).
\end{array} 
\right. 
\]
One can verify that $\sigma_i'' \in \Sigma_i$ is a valid strategy for player $i$. 
Now, fix $\sigma_{-i} \in \Sigma_{-i}$.  Then,
\begin{align*}
u_i(\sigma_i, \sigma_{-i}) &= \sum_{z \in Z_I} \pi^\sigma(z) u_i(z) + \sum_{z \notin Z_I} \pi^\sigma(z) u_i(z) \\
&= \pi_i^\sigma(I)\sum_{b \in A(I)} \sigma_i(I,b) v_i(I,\sigma_{(I \rightarrow b)}) + \sum_{z \notin Z_I} \pi^\sigma(z) u_i(z) \\
&\leq \pi_i^\sigma(I) \sigma_i(I,a) v_i(I, (\sigma_i', \sigma_{-i})) \\
&\ \ \ \ + \pi_i^\sigma(I)(1 - \sigma_i(I,a))\sum_{\substack{b \in A(I)\\b \neq a}} \hat{\sigma}_i(I,b) v_i(I, (\hat{\sigma}_{i(I \rightarrow b)}, \sigma_{-i})) \\
&\ \ \ \ + \sum_{z \notin Z_I} \pi^\sigma(z) u_i(z) \\
&= \pi_i^\sigma(I) v_i(I, (\sigma_i'', \sigma_{-i})) + \sum_{z \notin Z_I} \pi^\sigma(z) u_i(z) \\
&= u_i(\sigma_i'', \sigma_{-i}).
\end{align*}
Thus, $u_i(\sigma_i, \sigma_{-i}) \leq u_i(\sigma_i'', \sigma_{-i})$ for all $\sigma_{-i} \in \Sigma_{-i}$. 
A similar argument shows that $u_i(\sigma_i, \sigma_{-i}') < u_i(\sigma_i'', \sigma_{-i}')$, proving that $\sigma_i$ is weakly dominated by $\sigma_i''$.
\end{proof}

% Normal-form games proof (maybe include this proof in paper because it is so simple)
Next, we prove Theorem \ref{thm:normaldom}, using the fact that new iterative dominances only arise from removing actions and never from removing mixed strategies \cite{DominanceComplexity}:

\begin{theorem-normaldom}
Let $\sigma^1, \sigma^2, ...$ be a sequence of strategy profiles in a normal-form game where all players' strategies are computed by regret minimization algorithms where for all $i \in N$, $a \in A_i$, if $R_i^T(a) < 0$ and $R_i^T(a) < \max_{b \in A_i} R_i^T(b)$, then $\sigma_i^{T+1}(a) = 0$. 
If $\sigma_i$ is an iteratively strictly dominated strategy, then there exists an integer $T_0$ such that for all $T \geq T_0$, $\text{\emph{supp}}(\sigma_i) \nsubseteq \text{\emph{supp}}(\sigma_i^T)$. 
\end{theorem-normaldom}
\begin{proof} 
Let $a^1, a^2, ..., a^k$ be iteratively strictly dominated actions (pure strategies) for players $j_1, j_2, ..., j_k$ respectively that once removed in sequence yields strict domination of $\sigma_i$. 
Let $B_{-i} = A_{-i} \backslash \{a^1, a^2, ..., a^k\}$ be the set of opponent actions other than $a^1, a^2, ..., a^k$. 
Next, by iterative strict domination of $\sigma_i$ and because the game is finite, there exists another strategy $\sigma_i' \in \Sigma_i$ such that 
\[ \epsilon = \min_{a_{-i} \in B_{-i}} u_i(\sigma_i', a_{-i}) - u_i(\sigma_i, a_{-i}) > 0, \]
so that $u_i(\sigma_i, a_{-i}) \leq u_i(\sigma_i', a_{-i}) - \epsilon$ for all $a_{-i} \in B_{-i}$. 
Then, 
%Then, because $R_i^T / T \rightarrow 0$, there exists an integer $T_0$ such that for all $T \geq T_0$, $\max_{a \in A_i} R_i^T(a) \leq R_i^{T,+} < \epsilon (T - T_0') - \sum_{t=1}^{T_0'} ( u_i(\sigma_i, \sigma_{-i}^t) - u_i(\sigma_i', \sigma_{-i}^t))$. 
%Thus, for all $T \geq T_0$, \\
\begin{align}
\sum\limits_{a \in A_i}\sigma_i(a)R_i^T(a) &= \sum_{a \in A_i}\sigma_i(a)R_i^T(a) - \sum_{a \in A_i}\sigma_i'(a)R_i^T(a) + \sum_{a \in A_i}\sigma_i'(a)R_i^T(a) \nonumber \\
&= \sum_{a \in A_i} \left( \sigma_i(a) - \sigma_i'(a) \right) \sum_{t=1}^T \left( u_i(a, \sigma_{-i}^t) - u_i(\sigma^t) \right) \nonumber \\
&\ \ \ \  + \sum_{a \in A_i}\sigma_i'(a)R_i^T(a) \nonumber \\
&= \sum_{t=1}^{T} \left( u_i(\sigma_i, \sigma_{-i}^t) - u_i(\sigma_i', \sigma_{-i}^t) \right) + \sum_{a \in A_i}\sigma_i'(a)R_i^T(a) \nonumber \\
&= \sum_{\substack{\text{supp}(\sigma_{-i}^t) \nsubseteq B_{-i}\\1 \leq t \leq T}} \left( u_i(\sigma_i, \sigma_{-i}^t) - u_i(\sigma_i', \sigma_{-i}^t) \right) \nonumber \\
&\ \ \ \ + \sum_{\substack{\text{supp}(\sigma_{-i}^t) \subseteq B_{-i}\\1 \leq t \leq T}} \left( u_i(\sigma_i, \sigma_{-i}^t) - u_i(\sigma_i', \sigma_{-i}^t) \right) + \sum_{a \in A_i}\sigma_i'(a)R_i^T(a) \nonumber \\
&= \sum_{\substack{\text{supp}(\sigma_{-i}^t) \nsubseteq B_{-i}\\1 \leq t \leq T}} \left( u_i(\sigma_i, \sigma_{-i}^t) - u_i(\sigma_i', \sigma_{-i}^t) \right) \nonumber \\
&\ \ \ \ + \sum_{\substack{\text{supp}(\sigma_{-i}^t) \subseteq B_{-i}\\1 \leq t \leq T}} \sum_{a_{-i} \in B_{-i}} \sigma_{-i}^t(a_{-i}) \left( u_i(\sigma_i, a_{-i}) - u_i(\sigma_i', a_{-i}) \right) \nonumber \\
&\ \ \ \  + \sum_{a \in A_i}\sigma_i'(a)R_i^T(a), \text{ where } \sigma_{-i}(a_{-i}) = \prod_{j \neq i} \sigma_j(a_j) \nonumber \\
&\leq \sum_{\substack{\text{supp}(\sigma_{-i}^t) \nsubseteq B_{-i}\\1 \leq t \leq T}} \left( u_i(\sigma_i, \sigma_{-i}^t) - u_i(\sigma_i', \sigma_{-i}^t) \right) \nonumber \\
&\ \ \ \ + \sum_{\substack{\text{supp}(\sigma_{-i}^t) \subseteq B_{-i}\\1 \leq t \leq T}} (-\epsilon) + \max_{a \in A_i}R_i^T(a). \label{eq:normaldom}
\end{align}
We claim that there exists an integer $T_0$ such that for all $T \geq T_0$, there exists $a \in \text{supp}(\sigma_i)$ such that $R_i^T(a) < 0$ and $R_i^T(a) < \max_{b \in A_i} R_i^T(b)$.  
By our assumption, this implies that for all $T \geq T_0$, there exists an action $a \in \text{supp}(\sigma_i)$ such that $a \notin \text{supp}(\sigma_i^T)$, establishing the theorem. 

To complete the proof, it remains to establish the claim, which we prove by strong induction on $k$. 
For the base case $k = 0$, we have $B_{-i} = A_{-i}$, and so by equation \eqref{eq:normaldom} we have
\begin{align}
\min_{a \in \text{supp}(\sigma_i)} R_i^T(a) &\leq \sum\limits_{a \in A_i}\sigma_i(a)R_i^T(a) \nonumber \\
&\leq -\epsilon T + \max_{a \in A_i}R_i^T(a) \label{eq:normaldom2} \\
&\leq -\epsilon T + R_i^{T,+}. \nonumber
\end{align} 
Dividing both sides by $T$ and taking the limit superior gives
\begin{align*}
\limsup_{T \rightarrow \infty} \frac{1}{T} \min_{a \in \text{supp}(\sigma_i)} R_i^T(a) &\leq -\epsilon + \limsup_{T \rightarrow \infty} \frac{R_i^{T,+}}{T} \\
&= -\epsilon \\
&< 0.
\end{align*}
Thus, there exists an integer $T_0$ such that for all $T \geq T_0$, $R_i^T(a^*) < 0$ where $a^* = \text{arg}\min_{a \in \text{supp}(\sigma_i)} R_i^T(a)$. 
Also, by equation \eqref{eq:normaldom2}, $R_i^T(a^*) \leq -\epsilon T + \max_{a \in A_i} R_i^T(a) < \max_{a \in A_i} R_i^T(a)$, completing the base case. 

For the induction step, we may assume that there exist integers $T_1, ..., T_k$ such that for all $1 \leq \ell \leq k$, $T \geq T_\ell$, $R_{j_\ell}^T(a^\ell) < 0$ and $R_{j_\ell}^T(a^\ell) < \max_{b \in A_{j_\ell}} R_{j_\ell}^T(b)$. 
This means that for all $T \geq T_0' = \max\{T_1, ..., T_k\}$, $a^\ell \notin \text{supp}(\sigma_{j_\ell}^T)$ for all $1 \leq \ell \leq k$. 
Hence, $\text{supp}(\sigma_{-i}^T) \subseteq B_{-i}$ for all $T \geq T_0'$. 
Therefore, again setting $a^* = \text{arg}\min_{a \in \text{supp}(\sigma_i)}R_i^T(a)$, by equation \eqref{eq:normaldom} we have
\begin{align}
R_i^T(a^*) &\leq \sum_{a \in A_i} \sigma_i(a) R_i^T(a) \nonumber \\
&\leq \sum_{\substack{\text{supp}(\sigma_{-i}^t) \nsubseteq B_{-i}\\1 \leq t \leq T}} \left( u_i(\sigma_i, \sigma_{-i}^t) - u_i(\sigma_i', \sigma_{-i}^t) \right) \nonumber \\
&\ \ \ \ + \sum_{\substack{\text{supp}(\sigma_{-i}^t) \subseteq B_{-i}\\1 \leq t \leq T}} (-\epsilon) + \max_{a \in A_i}R_i^T(a) \nonumber \\
&\leq T_0' \Delta_i - \epsilon (T - T_0') + \max_{a \in A_i} R_i^T(a), \text{ where } \Delta_i = \max_{a,a' \in A} u_i(a) - u_i(a') \label{eq:normaldom3} \\
&\leq T_0' \Delta_i - \epsilon (T - T_0') + R_i^{T,+}. \nonumber
\end{align}
Dividing both sides by $T$ and taking the limit superior gives
\begin{align*}
\limsup_{T \rightarrow \infty} \frac{R_i^T(a^*)}{T} &\leq \limsup_{T \rightarrow \infty} \left( \frac{T_0' \Delta_i}{T} - \frac{\epsilon( T - T_0' ) }{T} + \frac{R_i^{T,+}}{T} \right) \\
&= -\epsilon \\
&< 0. 
\end{align*}
Thus, there exists an integer $T_0$ such that for all $T \geq T_0$, $T_0' \Delta_i < \epsilon(T - T_0')$ and $R_i^T(a^*) < 0$. 
By equation \eqref{eq:normaldom3}, this also means that for $T \geq T_0$, $R_i^T(a^*) < \max_{a \in A_i} R_i^T(a)$, completing the induction step. 
This establishes the claim and completes the proof.
\end{proof}
%Let $b^1$, $b^2, ..., b^k$ be a sequence of ordered actions for player $j_1, j_2, ..., j_k$ respectively that once removed yields iterative dominance of $\sigma_i$ (such an integer $k$ exists since the game is finite). 
%This means that $\sigma_{j_1}(b^1) = 1$ is a strictly dominated strategy, and so the conditions of Lemma \ref{lem:normaldom}(ii) hold for $\sigma_{j_1}(b^1) = 1$ with $T_0' = 1$ and $B_{-j_1} = A_{-j_1}$. 
%So by the lemma, there exists an integer $T_1$ such that for all $T \geq T_1$, $b^1 \notin \text{supp}(\sigma_{j_1}^T)$. 
%Now, since $\sigma_{j_2}(b^2) = 1$ is strictly dominated after the removal of $b^1$ (or perhaps before the removal), the conditions of Lemma \ref{lem:normaldom}(ii) hold again for $\sigma_{j_2}(b^2) = 1$ with $T_0' = T_1$ and $B_{-j_1,j_2} = A_{-j_1,j_2}$, $B_{j_1} = A_{j_1} \backslash \{b^1\}$. 
%Applying Lemma \ref{lem:normaldom}(ii) again gives us an integer $T_2$ such that for all $T \geq T_2$, $b^2 \notin \text{supp}(\sigma_{j_2}^T)$. 
%Repeating this argument a total of $k$ times gives integers $T_1, ..., T_k$ such that for all $T \geq T_\ell$, $b^\ell \notin \text{supp}(\sigma_{j_\ell}^T)$, for $\ell = 1,...,k$. 
%Finally, applying Lemma \ref{lem:normaldom}(i) for $\sigma_i$ with $T_0' = \max\{T_1,...,T_k\}$ and $B_{-i} = \{a_{-i} \in A_{-i} \mid a_{j_\ell} \neq b^\ell, 1 \leq \ell \leq k, j_\ell \neq i\}$  gives our result. 
%\end{proof}

% Prove removal of strictly dominated actions
Before proving Theorems \ref{thm:domacts} and \ref{thm:domstrats}, we need an additional lemma. 
For $\sigma_i \in \Sigma_i$ and $I \in \cI_i$, define the \defword{full counterfactual regret} for $\sigma_i$ at $I$ to be 
\[ R_{i,\text{full}}^T(I, \sigma_i) = \sum_{t=1}^T (v_i(I, (\sigma_i, \sigma_{-i}^t)) - v_i(I, \sigma^t)). \] 
We begin by relating full counterfactual regret to a sum over cumulative counterfactual regrets. 
This step was part of the original CFR analysis \cite{CfrTechReport}, but we relate these terms here in a slightly different form. 
For $I, I' \in \cI_i$, $h \in I$, $h' \in I'$, and $\sigma_i \in \Sigma_i$, define $\pi_i^\sigma(I, I') = \pi_i(h, h')$, which is well-defined due to perfect recall. 
\begin{lemma}
\label{lem:fullcfr}
\[ R_{i,\text{\emph{full}}}^T(I, \sigma_i) = \sum_{I' \in D(I)} \pi_i^\sigma(I, I') \sum_{a \in A(I')} \sigma_i(I',a) R_i^T(I',a). \]
\end{lemma}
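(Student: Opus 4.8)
The plan is to reduce the lemma to a single-iteration identity and then prove that identity by induction up the game tree. Since $R_{i,\text{full}}^T(I,\sigma_i)$ and each $R_i^T(I',a)=\sum_{t=1}^T r_i^t(I',a)$ are both sums over $t=1,\dots,T$, and the reach probabilities $\pi_i^\sigma(I,I')$ depend only on the fixed deviation strategy $\sigma_i$ and not on $t$, it suffices to fix an iteration $t$ and establish
\[ v_i(I,(\sigma_i,\sigma_{-i}^t)) - v_i(I,\sigma^t) = \sum_{I' \in D(I)} \pi_i^\sigma(I,I') \sum_{a \in A(I')} \sigma_i(I',a)\, r_i^t(I',a). \]
Summing this over $t$ yields the lemma.

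The first ingredient is a recursive decomposition of the counterfactual value. For a profile $\rho$, let $\text{Succ}_i(I,a)$ denote the information sets of player $i$ that are immediate successors of $I$ reached via action $a$, write $\text{Succ}_i(I) = \bigcup_a \text{Succ}_i(I,a)$, and let $T_i(I,a)$ collect the utility from terminal histories reached after $a$ at $I$ but before player $i$ acts again. Splitting each $z \in Z_I$ at the history where player $i$ next acts and using perfect recall together with $\pi^\rho = \pi_i^\rho \pi_{-i}^\rho$, I would derive
\[ v_i(I,\rho) = \sum_{a \in A(I)} \rho_i(I,a)\Bigl( T_i(I,a) + \sum_{I' \in \text{Succ}_i(I,a)} v_i(I',\rho) \Bigr). \]
The point that makes the bridging weight between $v_i(I,\cdot)$ and $v_i(I',\cdot)$ a single factor of player $i$'s reach probability is that the opponents' and chance's contributions exactly convert $\pi_{-i}^\rho(z[I])$ into $\pi_{-i}^\rho(z[I'])$, so only player $i$'s action probability at $I$ survives.

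Writing $G(I) = v_i(I,(\sigma_i,\sigma_{-i}^t)) - v_i(I,\sigma^t)$, I would apply this decomposition to both profiles. Here lies the main obstacle: the bridging weights differ, being $\sigma_i(I,a)$ for the deviation profile but $\sigma_i^t(I,a)$ for the current profile, so the successor terms do not immediately assemble into $G(I')$. The resolution is to add and subtract $\sigma_i(I,a)\,v_i(I',\sigma^t)$ and to combine the resulting weight-difference terms with the matching $T_i(I,a)$ differences; since $T_i(I,a)$ depends only on $\sigma_{-i}^t$ and is common to both profiles, the combination reconstitutes $\sum_a (\sigma_i(I,a)-\sigma_i^t(I,a))\,v_i(I,\sigma^t_{(I\to a)}) = \sum_a \sigma_i(I,a)\, r_i^t(I,a)$. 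This gives the one-step recursion
\[ G(I) = \sum_{a \in A(I)} \sigma_i(I,a)\, r_i^t(I,a) + \sum_{I' \in \text{Succ}_i(I)} \pi_i^\sigma(I,I')\, G(I'). \]

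Finally, I would close the argument by strong induction on the finite set $D(I)$. The base case $D(I) = \{I\}$ has no player-$i$ successors and is exactly the immediate-regret term, using $\pi_i^\sigma(I,I)=1$. For the inductive step, substituting the hypothesis for each $G(I')$ into the one-step recursion and applying the chain rule $\pi_i^\sigma(I,I')\,\pi_i^\sigma(I',I'') = \pi_i^\sigma(I,I'')$ (valid by perfect recall) together with the disjoint partition $D(I) = \{I\} \sqcup \bigsqcup_{I' \in \text{Succ}_i(I)} D(I')$ collapses the nested sums into the single sum over $D(I)$ in the per-iteration identity. I expect the recursive value decomposition and the careful bookkeeping of the mismatched reach weights to be the crux; the telescoping and the induction are then routine.
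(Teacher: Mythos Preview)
Your proposal is correct and follows essentially the same approach as the paper: both prove the identity by strong induction on $|D(I)|$, recursively decomposing the counterfactual value through player $i$'s immediate successor information sets (your $\text{Succ}_i(I,a)$ is the paper's $S(I,a)$) and then collapsing via the chain rule for $\pi_i^\sigma$. The only cosmetic difference is that you first reduce to a per-iteration identity and isolate a clean one-step recursion for $G(I)$, whereas the paper carries the sum over $t$ throughout and applies the induction hypothesis directly to $\sum_t v_i(I',(\sigma_i,\sigma_{-i}^t))$; the underlying algebra is the same.
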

\begin{proof}
%For $I \in \cI_i$, define $R_{i,\text{full}}^T(I, \sigma_i) = \sum_{t=1}^T (v_i(I, (\sigma_i, \sigma_{-i}^T)) - v_i(I, \sigma^t))$. 
%Note that $R_{i,\text{full}}^T(\sigma_i) = \sum_{I \in \hat{\cI}_i} R_{i,\text{full}}^T(I, \sigma_i)$ where $\hat{\cI}_i = \{I \in \cI_i \mid \forall h \in I, h' \sqsubset h, P(h') \neq i \}$ is the set of all possible first information sets for player $i$ reached. 
We prove the lemma by strong induction on $|D(I)|$. For $I \in \cI_i$ and $a \in A(I)$, define 
\begin{align*} 
S(I,a) = \{I' \in \cI_i \mid &\exists h \in I, h' \in I' \text{ where } ha \sqsubseteq h' \\ &\text{ and } \nexists h'' \in H_i \text{ where } ha \sqsubseteq h'' \sqsubset h' \}
\end{align*}
to be the set of all possible successor information sets for player $i$ after taking action $a$ at $I$. In addition, define $Z(I,a)$ to be the set of terminal histories where the last action taken by player $i$ was $a$ at $I$. 
To begin,
\begin{align}
R_{i,\text{full}}^T(I, \sigma_i) &= \sum_{t=1}^T v_i(I, (\sigma_i, \sigma_{-i}^t)) - \sum_{t=1}^T v_i(I, \sigma^t) \nonumber \\
&= \sum_{t=1}^T \sum_{a \in A(I)} \sigma_i(I,a) v_i(I, (\sigma_{i(I \rightarrow a)}, \sigma_{-i}^t)) - \sum_{t=1}^T v_i(I, \sigma^t) \nonumber \\
&= \sum_{a \in A(I)} \sigma_i(I,a) \sum_{t=1}^T \left( \sum_{z \in Z(I,a)} \pi_{-i}^{\sigma^t}(z)u_i(z) \right. \nonumber \\
&\ \ \ \  \left. + \sum_{I' \in S(I,a)} v_i(I', (\sigma_i, \sigma_{-i}^t)) \right) - \sum_{t=1}^T v_i(I, \sigma^t). \label{eq:domactslem}
\end{align}
For the base case $D(I) = \{I\}$, we have $S(I,a) = \emptyset$ and $Z(I,a) = Z_I$, and so the right hand side of equation \eqref{eq:domactslem} reduces to $\sum_{a \in A(I)} \sigma_i(I,a) R_i^T(I,a)$ as desired. 
For the induction step, note that $|D(I')| < |D(I)|$ for all $I' \in S(I,a)$, and so we may apply the induction hypothesis to get, for all $I' \in S(I,a)$,
\begin{align*}
\sum_{t=1}^T v_i(I', (\sigma_i, \sigma_{-i}^t)) &= R_{i,\text{full}}^T(I', \sigma_i) + \sum_{t=1}^T v_i(I', \sigma^t) \\
&= \sum_{I'' \in D(I')} \pi_i^\sigma(I', I'') \sum_{b \in A(I'')} \sigma(I'', b) R_i^{T}(I'', b) \\
&\ \ \ \ + \sum_{t=1}^T v_i(I', \sigma^t).
\end{align*}
Finally, substituting into equation \eqref{eq:domactslem}, we have
\begin{align*}
R_{i,\text{full}}^T(I, \sigma_i) &= \sum_{a \in A(I)} \sigma_i(I,a) \left[ \sum_{t=1}^T \sum_{z \in Z(I,a)} \pi_{-i}^{\sigma^t}(z)u_i(z) \right. \\
&\ \ \ \  + \sum_{I' \in S(I,a)} \left( \sum_{I'' \in D(I')} \pi_i^\sigma(I', I'') \sum_{b \in A(I'')} \sigma_i(I'', b) R_i^{T}(I'', b) \right. \\
&\ \ \ \  \left. \left. + \sum_{t=1}^T v_i(I', \sigma^t) \right) \right] - \sum_{t=1}^T v_i(I, \sigma^t) \\
&= \sum_{a \in A(I)} \sigma_i(I,a) \sum_{t=1}^T v_i(I, \sigma^t_{(I \rightarrow a)}) - \sum_{t=1}^T v_i(I, \sigma^t) \\
&\ \ \ \  + \sum_{a \in A(I)} \sigma_i(I,a) \sum_{I' \in S(I,a)} \left( \sum_{I'' \in D(I')} \pi_i^\sigma(I', I'') \sum_{b \in A(I'')} \sigma_i(I'', b) R_i^{T}(I'', b) \right) \\
&= \sum_{a \in A(I)} \sigma_i(I,a) R_i^T(I, a) \\
&\ \ \ \  + \sum_{\substack{I' \in D(I)\\I' \neq I}} \pi_i^\sigma(I, I') \sum_{b \in A(I')} \sigma_i(I', b) R_i^{T}(I', b) \\
&= \sum_{I' \in D(I)} \pi_i^\sigma(I, I') \sum_{a \in A(I')} \sigma_i(I', a) R_i^{T}(I', a),
\end{align*}
completing the proof. 
\end{proof}

\begin{corollary}
\label{cor:minfullcfr}
\[ R_{i,\text{\emph{full}}}^T(I, \sigma_i) \leq \Delta_i |D(I)| \sqrt{|A(\cI_i)|T}. \]
\end{corollary}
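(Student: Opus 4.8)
The plan is to start from the decomposition established in Lemma~\ref{lem:fullcfr}, which expresses
\[ R_{i,\text{full}}^T(I, \sigma_i) = \sum_{I' \in D(I)} \pi_i^\sigma(I, I') \sum_{a \in A(I')} \sigma_i(I',a) R_i^T(I',a) \]
as a nonnegatively weighted sum of the per-information-set cumulative counterfactual regrets $R_i^T(I',a)$. I would then bound the three ingredients of this expression separately: the inner convex combination over actions, the cumulative counterfactual regret at a single information set, and the sum of reach probabilities over $D(I)$.

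The key quantitative input is the regret matching bound at a single information set. Since CFR applies regret matching at each $I'$ to the counterfactual regrets, and the range of the counterfactual values $v_i(I', \sigma_{(I' \rightarrow a)})$ across actions $a \in A(I')$ is at most $\Delta_i$, the regret matching guarantee from equation~\eqref{eq:RMBound} (applied locally, exactly as in the derivation of the overall CFR bound~\eqref{eq:CFRBound}) yields
\[ \max_{a \in A(I')} R_i^T(I', a) \leq \Delta_i \sqrt{|A(I')|\, T} \leq \Delta_i \sqrt{|A(\cI_i)|\, T} \]
for every $I' \in D(I)$, using $|A(I')| \leq |A(\cI_i)|$.

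The remaining two steps are elementary. For each $I'$, because $\sigma_i(I', \cdot)$ is a probability distribution over $A(I')$, the inner sum $\sum_{a} \sigma_i(I',a) R_i^T(I',a)$ is a convex combination and is therefore bounded above by $\max_{a} R_i^T(I',a)$; substituting the local bound above controls each inner sum by $\Delta_i \sqrt{|A(\cI_i)|\,T}$. Finally, each reach probability satisfies $\pi_i^\sigma(I,I') \leq 1$, so $\sum_{I' \in D(I)} \pi_i^\sigma(I,I') \leq |D(I)|$. Multiplying these together gives $R_{i,\text{full}}^T(I,\sigma_i) \leq \Delta_i |D(I)| \sqrt{|A(\cI_i)|\, T}$, as claimed.

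I expect the only genuinely non-routine point to be the justification of the local regret matching bound with range $\Delta_i$: one must confirm that the per-information-set counterfactual value differences are indeed bounded by the utility range $\Delta_i$ so that equation~\eqref{eq:RMBound} applies at each $I'$ with that constant. Once this is in hand, the convexity bound on the action average and the counting bound $\sum_{I'} \pi_i^\sigma(I,I') \leq |D(I)|$ are immediate, and the corollary follows directly from Lemma~\ref{lem:fullcfr}.
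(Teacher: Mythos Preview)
Your proposal is correct and follows essentially the same route as the paper: start from Lemma~\ref{lem:fullcfr}, bound the convex combination $\sum_a \sigma_i(I',a) R_i^T(I',a)$ by $\max_a R_i^T(I',a)$, drop the factors $\pi_i^\sigma(I,I') \leq 1$, and apply the local regret matching bound~\eqref{eq:RMBound} at each $I' \in D(I)$. The paper's proof is the same three-line chain of inequalities, with the only cosmetic difference that it passes through $R_i^{T,+}(I',a)$ rather than $R_i^T(I',a)$ in the intermediate step.
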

\begin{proof}
By Lemma \ref{lem:fullcfr}, 
\begin{align*}
R_{i,\text{full}}^T(I, \sigma_i) &= \sum_{I' \in D(I)} \pi_i^\sigma(I, I') \sum_{a \in A(I')} \sigma_i(I', a) R_i^{T}(I', a) \\
&\leq \sum_{I' \in D(I)} \max_{a \in A(I)} R_i^{T,+}(I', a) \\
&\leq |D(I)| \Delta_i \sqrt{|A(\cI_i)|T}
\end{align*}
by equation \eqref{eq:RMBound}.
\end{proof}

\begin{theorem-domacts}
Let $\sigma^1, \sigma^2, ...$ be strategy profiles generated by CFR in an extensive-form game, let $I \in \cI_i$, and let $a$ be an iteratively strictly dominated action at $I$, where removal in sequence of the iteratively strictly dominated actions $a_1, ..., a_k$ at $I_1, ..., I_k$ respectively yields iterative dominance of $a_{k+1} = a$. 
If for $1 \leq \ell \leq k+1$, there exist real numbers $\delta_\ell, \gamma_\ell > 0$ and an integer $T_\ell$ such that for all $T \geq T_\ell$, $| \Sigma_{\delta_\ell}(I_\ell) \cap \{ \sigma^t \mid T_\ell \leq t \leq T \} | \geq \gamma_\ell T$, then
\begin{itemize}
\item[\emph{(i)}] there exists an integer $T_0$ such that for all $T \geq T_0$, $R_i^T(I,a) < 0$,
\item[\emph{(ii)}] if $\lim_{T \rightarrow \infty} x^T / T = 0$, then $\lim_{T \rightarrow \infty} y^T(I,a) / T = 0$, where $y^T(I,a)$ is the number of iterations $1 \leq t \leq T$ satisfying $\sigma^t(I,a) > 0$, and
\item[\emph{(iii)}] if $\lim_{T \rightarrow \infty} x^T / T = 0$, then $\lim_{T \rightarrow \infty} \pi_i^{\bar{\sigma}^T}(I) \bar{\sigma}_i^T(I,a) = 0$. 
\end{itemize}
\end{theorem-domacts}
\begin{proof} 
We will first prove parts (i) and (ii) by strong induction on $k$, followed by proving (iii) from (ii). 
For $\delta \geq 0$, let $\hat{\Sigma}_{\delta}(I) = \{ \sigma \in \Sigma_{\delta}(I) \mid \sigma(I_\ell, a_\ell) = 0, 1 \leq \ell \leq k \}$ be the set of strategies in $\Sigma_\delta(I)$ that do not play $a_1, ..., a_k$. 
By iterative strict domination of $a$, there exists $\sigma_i' \in \Sigma_i$ such that $v_i(I, \sigma_{(I \rightarrow a)}) \leq v_i(I, (\sigma_i', \sigma_{-i}))$ for all $\sigma \in \hat{\Sigma}_0(I)$. 
Next, let $\delta = \delta_{k+1}$ and $\gamma = \gamma_{k+1}$. 
Then, since $\hat{\Sigma}_\delta(I)$ is a closed and bounded set and $v_i(I, \cdot)$ is continuous, by the Balzano-Weierstrass Theorem there exists an $\epsilon > 0$ such that $v_i(I, \sigma_{(I \rightarrow a)}) \leq v_i(I, (\sigma_i', \sigma_{-i})) - \epsilon$ for all $\sigma \in \hat{\Sigma}_\delta(I)$. 
%By Lemma \ref{lem:fullcfr}, $R_{i,\text{full}}^T(I, \sigma_i') / T \rightarrow 0$, and so there exists an integer $T_0'''$ such that for all $T \geq T_0'''$, $R_{i,\text{full}}^T(I, \sigma_i') < \epsilon \gamma T - \sum_{t=1}^{T_0'} ( v_i(I, \sigma_{(I \rightarrow a)}^t) - v_i(I, (\sigma_i', \sigma_{-i}^t)) )$.  
%Thus, for all $T \geq T_0 = \max\{T_0', T_0'', T_0'''\}$, 
Then, 
\begin{align}
R_i^T(I,a) &= R_i^T(I,a) - R_{i,\text{full}}^T(I, \sigma_i') + R_{i,\text{full}}^T(I, \sigma_i') \nonumber \\
&= \sum_{t=1}^T \left( v_i(I, \sigma_{(I \rightarrow a)}^t) - v_i(I, (\sigma_i', \sigma_{-i}^t)) \right) + R_{i,\text{full}}^T(I, \sigma_i') \nonumber \\
&= \sum_{t=1}^{T_0' - 1} \left( v_i(I, \sigma_{(I \rightarrow a)}^t) - v_i(I, (\sigma_i', \sigma_{-i}^t)) \right) \nonumber \\
&\ \ \ \ + \sum_{\substack{T_0' \leq t \leq T\\\sigma^t \notin \hat{\Sigma}_0(I)}} \left( v_i(I, \sigma_{(I \rightarrow a)}^t) - v_i(I, (\sigma_i', \sigma_{-i}^t)) \right) \nonumber \\
&\ \ \ \ + \sum_{\substack{T_0' \leq t \leq T\\\sigma^t \in \hat{\Sigma}_\delta(I)}} \left( v_i(I, \sigma_{(I \rightarrow a)}^t) - v_i(I, (\sigma_i', \sigma_{-i}^t)) \right) \nonumber \\ 
&\ \ \ \ + \sum_{\substack{T_0' \leq t \leq T\\\sigma^t \in \hat{\Sigma}_0(I) \backslash \hat{\Sigma}_\delta(I)}} \left( v_i(I, \sigma_{(I \rightarrow a)}^t) - v_i(I, (\sigma_i', \sigma_{-i}^t)) \right) + R_{i,\text{full}}^T(I, \sigma_i') \label{eq:domacts}. 
\end{align}
For the base case $k = 0$, we have $\hat{\Sigma}_0(I) = \Sigma$ and $\hat{\Sigma}_\delta(I) = \Sigma_\delta(I)$. 
Choose $T_0$ to be any integer greater than $\max\{ T_0', \Delta_i^2 |D(I)|^2 |A(\cI_i)| / \epsilon^2 \gamma^2 \}$ so that for all $T \geq T_0$, 
\begin{align*}
R_i^T(I,a) &= \sum_{t=1}^{T_0' - 1} \left( v_i(I, \sigma_{(I \rightarrow a)}^t) - v_i(I, (\sigma_i', \sigma_{-i}^t)) \right) \\
&\ \ \ \ + \sum_{\substack{T_0' \leq t \leq T\\\sigma^t \in \Sigma_\delta(I)}} \left( v_i(I, \sigma_{(I \rightarrow a)}^t) - v_i(I, (\sigma_i', \sigma_{-i}^t)) \right) \nonumber \\ 
&\ \ \ \ + \sum_{\substack{T_0' \leq t \leq T\\\sigma^t \notin \Sigma_\delta(I)}} \left( v_i(I, \sigma_{(I \rightarrow a)}^t) - v_i(I, (\sigma_i', \sigma_{-i}^t)) \right) + R_{i,\text{full}}^T(I, \sigma_i') \\
&\leq -\epsilon |\Sigma_\delta(I) \cap \{\sigma^t \mid T_0 \leq t \leq T\}| + R_{i,\text{full}}^T(I, \sigma_i') \\
&\leq -\epsilon \gamma T + \Delta_i |D(I)| \sqrt{|A(\cI_i)|T} \text{ by Corollary \ref{cor:minfullcfr}} \\
&< 0
\end{align*}
by choice of $T_0$. 
This establishes part (i) of the base case. 
For part (ii), since CFR applies regret matching at $I$, by equation \eqref{eq:CFRRM}
it follows that for all $T \geq T_0$, either $\sum_{b \in A(I)} R_i^{T,+}(I,b) = 0$ or $\sigma_i^{T+1}(I,a) = 0$. 
Thus,
\begin{align*}
\lim_{T \rightarrow \infty} \frac{y^T(I,a)}{T} &= \lim_{T \rightarrow \infty} \frac{y^{T_0}(I,a) + (y^T(I,a) - y^{T_0}(I,a))}{T} \\
&\leq \lim_{T \rightarrow \infty} \frac{y^{T_0}(I,a) + x^T}{T} \\
&= 0.
\end{align*}
Thus, (ii) holds and we have established the base case of our induction. 

For the induction step, we now assume that parts (i) and (ii) hold for all $a_1, ..., a_k$. 
We will show that there exists an integer $T_0$ such that for all $T \geq T_0$, $R_i^T(I,a) < 0$.
This will establish part (i), and part (ii) will then follow as before to complete the induction step.  

Firstly, note that
\[ \sum_{\substack{T_0' \leq t \leq T\\\sigma^t \in \hat{\Sigma}_0(I) \backslash \hat{\Sigma}_\delta(I)}} \left( v_i(I, \sigma_{(I \rightarrow a)}^t) - v_i(I, (\sigma_i', \sigma_{-i}^t)) \right) \leq 0 \]
by iterative domination of $a$. 
Secondly, \\
$\mathlarger{\sum}_{\substack{T_0' \leq t \leq T\\\sigma^t \in \hat{\Sigma}_\delta(I)}} \left( v_i(I, \sigma_{(I \rightarrow a)}^t) - v_i(I, (\sigma_i', \sigma_{-i}^t)) \right)$
\begin{align*}
&\leq -\epsilon |\hat{\Sigma}_\delta(I) \cap \{\sigma^t \mid T_0 \leq t \leq T\}| \\
&= -\epsilon \left( |\Sigma_\delta(I) \cap \{\sigma^t \mid T_0 \leq t \leq T\}| - |(\Sigma_\delta(I) \backslash \hat{\Sigma}_\delta(I)) \cap \{\sigma^t \mid T_0 \leq t \leq T\}| \right) \\
&\leq -\epsilon \gamma T + \epsilon \sum_{\ell = 1}^k y^T(I_\ell, a_\ell). 
\end{align*}
Thirdly, \\
\[ \sum_{\substack{T_0' \leq t \leq T\\\sigma^t \notin \hat{\Sigma}_0(I)}} \left( v_i(I, \sigma_{(I \rightarrow a)}^t) - v_i(I, (\sigma_i', \sigma_{-i}^t)) \right) \leq \Delta_i\sum_{\ell = 1}^k y^T(I_\ell, a_\ell). \]
Thus, substituting these three inequalities and Corollary \ref{cor:minfullcfr} into equation \eqref{eq:domacts} gives 
\begin{align*}
R_i^T(I,a) &\leq \sum_{t=1}^{T_0' - 1} \left( v_i(I, \sigma_{(I \rightarrow a)}^t) - v_i(I, (\sigma_i', \sigma_{-i}^t)) \right) \\
&\ \ \ \ + \Delta_i\sum_{\ell = 1}^k y^T(I_\ell, a_\ell) -\epsilon \gamma T + \epsilon \sum_{\ell = 1}^k y^T(I_\ell, a_\ell) + \Delta_i|D(I)|\sqrt{|A(\cI_i)|T}.
\end{align*}
Dividing both sides by $T$ and taking the limit superior gives 
\begin{align*}
\limsup_{T \rightarrow \infty} \frac{R_i^T(I,a)}{T} &\leq \sum_{t=1}^{T_0' - 1} \left( v_i(I, \sigma_{(I \rightarrow a)}^t) - v_i(I, (\sigma_i', \sigma_{-i}^t)) \right) \limsup_{T \rightarrow \infty} \frac{1}{T} \\
&\ \ \ \ \ + (\Delta_i + \epsilon) \sum_{\ell = 1}^k \limsup_{T \rightarrow \infty} \frac{y^T(I_\ell,a_\ell)}{T} - \epsilon \gamma + \Delta_i|D(I)|\sqrt{|A(\cI_i)|} \limsup_{T \rightarrow \infty} \frac{1}{\sqrt{T}} \\
&= -\epsilon \gamma \\
&< 0
\end{align*}
by applying part (ii) of the induction hypothesis. 
Therefore, there exists an integer $T_0$ such that for all $T \geq T_0$, $R_i^T(I,a) / T < 0$ and thus $R_i^T(I,a) < 0$, completing the induction step. 

Parts (i) and (ii) are now proven. 
It remains to prove (iii). 
To that end,
\begin{align*}
\lim_{T \rightarrow \infty} \pi_i^{\bar{\sigma}^T}(I) \bar{\sigma}_i^T(I,a) &= \lim_{T \rightarrow \infty} \left( \frac{1}{T} \sum_{t=1}^T \pi_i^{\sigma^t}(I) \right) \frac{\sum_{t=1}^T \pi_i^{\sigma^t}(I) \sigma_i^t(I,a)}{\sum_{t=1}^T \pi_i^{\sigma^t}(I)} \\
&= \lim_{T \rightarrow \infty} \frac{\sum_{t=1}^T \pi_i^{\sigma^t}(I) \sigma_i^t(I,a)}{T} \\
&\leq \lim_{T \rightarrow \infty} \frac{y^T(I,a)}{T} \\
&= 0
\end{align*}
by part (ii). 
Since $\pi_i^{\bar{\sigma}^T}(I) \bar{\sigma}_i^T(I,a)$ is nonnegative, it follows that $\lim_{T \rightarrow \infty} \pi_i^{\bar{\sigma}^T}(I) \bar{\sigma}_i^T(I,a) = 0$, completing the proof.
\end{proof}

% Prove removal of (iteratively) strictly dominated strategies from the current profile (well, not quite because of default action probabilities, so probably need some condition on frequency default probs used)
\begin{theorem-domstrats}
Let $\sigma^1, \sigma^2, ...$ be strategy profiles generated by CFR in an extensive-form game, and let $\sigma_i$ be an iteratively strictly dominated strategy. Then,
\begin{itemize}
\item[\emph{(i)}] there exists an integer $T_0$ such that for all $T \geq T_0$, there exist $I \in \cI_i$, $a \in A(I)$ such that $\pi_i^\sigma(I)\sigma_i(I,a) > 0$ and $R_i^T(I,a) < 0$, and
\item[\emph{(ii)}] if $\lim_{T \rightarrow \infty} x^T / T = 0$, then $\lim_{T \rightarrow \infty} y^T(\sigma_i) / T = 0$, where $y^T(\sigma_i)$ is the number of iterations $1 \leq t \leq T$ satisfying $\text{supp}(\sigma_i) \subseteq \text{supp}(\sigma_i^t)$.
\end{itemize} 
\end{theorem-domstrats}
\begin{proof} 
Let $s_{j_1}^1, s_{j_2}^2, ..., s_{j_k}^k$ be iteratively strictly dominated pure strategies that once removed in sequence yields strict domination of $\sigma_i$. 
Let $S_{-i} = \cS_{-i} \backslash \{s_{j_1}^1, s_{j_2}^2, ..., s_{j_k}^k\}$ be the set of opponent pure strategy profiles that do not play any of $s_{j_1}^1, s_{j_2}^2, ..., s_{j_k}^k$. 
Next, by iterative strict domination of $\sigma_i$ and because the game is finite, there exists another strategy $\sigma_i' \in \Sigma_i$ such that 
\[ \epsilon = \min_{s_{-i} \in S_{-i}} u_i(\sigma_i', s_{-i}) - u_i(\sigma_i, s_{-i}) > 0,\] 
so that $u_i(\sigma_i, s_{-i}) \leq u_i(\sigma_i', s_{-i}) - \epsilon$ for all $s_{-i} \in S_{-i}$. 
%Next, choose $T_0$ big enough so that $(1/T_0)\sum_{t=1}^{T_0'} \left( u_i(\sigma_i, \sigma_{-i}^t) - u_i(\sigma_i', \sigma_{-i}^t) \right) + \Delta_i|\cI_i|\sqrt{|A(\cI_i)|} / \sqrt{T_0} + \epsilon T_0' / T_0 < \epsilon$. 

For $\hat{\sigma}_i \in \Sigma_i$, define $R_{i,\text{full}}^T(\hat{\sigma_i}) = \sum_{t=1}^{T} \left( u_i(\hat{\sigma_i}, \sigma_{-i}^t) - u_i(\sigma^t) \right)$. 
Note that 
\begin{align*}
R_{i,\text{full}}^T(\hat{\sigma_i}) &= \sum_{I \in \hat{\cI}_i} R_{i,\text{full}}^T(I, \hat{\sigma_i}), 
\end{align*}
where $\hat{\cI}_i = \{I \in \cI_i \mid \forall h \in I, h' \sqsubset h, P(h') \neq i \}$ is the set of all possible first information sets for player $i$ reached. 
So, by Corollary \ref{cor:minfullcfr}, $R_{i,\text{full}}^T(\hat{\sigma_i}) \leq \Delta_i |\cI_i| \sqrt{|A(\cI_i)|T}$ for all $\hat{\sigma}_i \in \Sigma_i$. 
Then by Lemma \ref{lem:fullcfr}, we have \\
$\mathlarger{\sum}_{I \in \cI_i} \pi_i^\sigma(I) \mathlarger{\sum}_{a \in A(I)} \sigma_i(I,a) R_i^T(I,a)$
\begin{align}
&= R_{i,\text{full}}^T(\sigma_i) - R_{i,\text{full}}^T(\sigma_i') + R_{i,\text{full}}^T(\sigma_i') \nonumber \\
%&= \sum_{t=1}^T \left( u_i(\sigma_i, \sigma_{-i}^t) - u_i(\sigma_i, \sigma_{-i}^t) \right) \nonumber \\
&= \sum_{t=1}^{T} \left( u_i(\sigma_i, \sigma_{-i}^t) - u_i(\sigma_i', \sigma_{-i}^t) \right) + R_{i,\text{full}}^T(\sigma_i') \nonumber \\
&= \sum_{\substack{\text{supp}(\sigma_{-i}^t) \subseteq S_{-i}\\1 \leq t \leq T}} \sum_{s_{-i} \in S_{-i}} \sigma_{-i}^t(s_{-i}) \left( u_i(\sigma_i, s_{-i}) - u_i(\sigma_i', s_{-i}) \right) \nonumber \\
&\ \ \ \ + \sum_{\substack{\text{supp}(\sigma_{-i}^t) \nsubseteq S_{-i}\\1 \leq t \leq T}} \left( u_i(\sigma_i, \sigma_{-i}^t) - u_i(\sigma_i', \sigma_{-i}^t) \right) + R_{i,\text{full}}^T(\sigma_i'), \nonumber \\
&\ \ \ \ \text{ where } \sigma_{-i}(s_{-i}) = \prod_{\substack{j \neq i\\I \in \cI_j}} \sigma_j(I, s_j(I)) \nonumber \\
&\leq -\epsilon \left( T - \sum_{\ell=1}^k y^T(s_{j_\ell}^\ell) \right) + \Delta_i \sum_{\ell = 1}^k y^T(s_{j_\ell}^\ell) + \Delta_i |\cI_i| \sqrt{|A(\cI_i)|T}. \label{eq:domstrats}
\end{align}
We claim that 
\[ \limsup_{T \rightarrow \infty} \frac{1}{T}\sum_{I \in \cI_i} \pi_i^\sigma(I) \mathlarger{\sum}_{a \in A(I)} \sigma_i(I,a) R_i^T(I,a) < 0. \] 
Assuming the claim holds, because $(1/T)$, $\pi_i^\sigma(I)$, and $\sigma_i(I,a)$ are nonnegative, it follows that there exists an integer $T_0$ such that for all $T \geq T_0$, there exist $I \in \cI_i$, $a \in A(I)$ such that $\pi_i^\sigma(I) \sigma_i(I,a) > 0$ and $R_i^T(I,a) < 0$, establishing (i). 
For part (ii), note that part (i) and equation \eqref{eq:CFRRM} 
imply that for all $T \geq T_0$, either $\sum_{b \in A(I)} R_i^{T,+}(I,b) = 0$ or $\text{supp}(\sigma_i) \nsubseteq \text{supp}(\sigma_i^T)$. 
Thus, 
\begin{align*}
\lim_{T \rightarrow \infty} \frac{y^T(\sigma_i)}{T} &= \lim_{T \rightarrow \infty} \frac{y^{T_0}(\sigma_i) + (y^T(\sigma_i) - y^{T_0}(\sigma_i))}{T} \\
&\leq \lim_{T \rightarrow \infty} \frac{y^{T_0}(\sigma_i) + x^T}{T} \\
&= 0,
\end{align*}
establishing part (ii). 

To complete the proof, it remains to prove the claim, which we will prove by induction on $k$. 
For the base case $k = 0$, equation \eqref{eq:domstrats} gives 
\begin{align*}
\limsup_{T \rightarrow \infty} \frac{1}{T} \sum_{I \in \cI_i} \pi_i^\sigma(I) \mathlarger{\sum}_{a \in A(I)} \sigma_i(I,a) R_i^T(I,a) &\leq \limsup_{T \rightarrow \infty} -\epsilon  + \frac{\Delta_i |\cI_i| \sqrt{|A(\cI_i)|}}{\sqrt{T}} \\
&= -\epsilon \\
&< 0.
\end{align*}
For the induction step, we may assume that parts (i) and (ii) hold for all $s_{j_1}^1, s_{j_2}^2, ..., s_{j_k}^k$. 
Then equation \eqref{eq:domstrats} implies
\begin{align*}
\limsup_{T \rightarrow \infty} \frac{1}{T} \sum_{I \in \cI_i} \pi_i^\sigma(I) \sum_{a \in A(I)} \sigma_i(I,a) R_i^T(I,a) &\leq -\epsilon + (\epsilon + \Delta_i) \sum_{\ell = 1}^k \limsup_{T \rightarrow \infty} \frac{y^T(s_{j_\ell}^\ell)}{T} \\
&\ \ \ \ + \limsup_{T \rightarrow \infty} \frac{\Delta_i |\cI_i| \sqrt{|A(\cI_i)|}}{\sqrt{T}} \\
&= -\epsilon \\
&< 0,
\end{align*}
proving the claim. 
\end{proof}

%% References
%%
%% Following citation commands can be used in the body text:
%% Usage of \cite is as follows:
%%   \cite{key}          ==>>  [#]
%%   \cite[chap. 2]{key} ==>>  [#, chap. 2]
%%   \citet{key}         ==>>  Author [#]

%% References with bibTeX database:

\bibliographystyle{model1-num-names}
\bibliography{citations}

%% Authors are advised to submit their bibtex database files. They are
%% requested to list a bibtex style file in the manuscript if they do
%% not want to use model1-num-names.bst.

%% References without bibTeX database:

% \begin{thebibliography}{00}

%% \bibitem must have the following form:
%%   \bibitem{key}...
%%

% \bibitem{}

% \end{thebibliography}

%%
%% End of file `elsarticle-template-1-num.tex'.

\end{document}